\newcommand\vldbdoi{10.14778/3665844.3665854}
\newcommand\vldbpages{2241 - 2254}
\newcommand\vldbvolume{17}
\newcommand\vldbissue{9}
\newcommand\vldbyear{2024}
\newcommand\vldbauthors{\authors}
\newcommand\vldbtitle{\shorttitle} 
\newcommand\vldbavailabilityurl{https://github.com/WeiJiuQi/DET-LSH}
\newcommand\vldbpagestyle{empty} 
\newtheorem{definition}{Definition}
\newtheorem{lemma}{Lemma}
\newtheorem{theorem}{Theorem}
\newcommand\blfootnote[1]{%
\begingroup
\renewcommand\thefootnote{}\footnote{#1}%
\addtocounter{footnote}{-1}%
\endgroup
}
\begin{document}
\title{DET-LSH: A Locality-Sensitive Hashing Scheme with Dynamic Encoding Tree for Approximate Nearest Neighbor Search}

\author{Jiuqi Wei}
\affiliation{%
  \institution{Institute of Computing Technology, Chinese Academy of Sciences}
  \institution{University of Chinese Academy of Sciences}
}
\email{weijiuqi19z@ict.ac.cn}

\author{Botao Peng}
\orcid{0000-0002-1825-0097}
\affiliation{%
  \institution{Institute of Computing Technology, Chinese Academy of Sciences}
}
\email{pengbotao@ict.ac.cn}

\author{Xiaodong Lee}
\affiliation{%
  \institution{Institute of Computing Technology, Chinese Academy of Sciences}
}
\email{xl@ict.ac.cn}

\author{Themis Palpanas}
\affiliation{%
  \institution{LIPADE, Universit{\'e} Paris Cit{\'e}}
}
\email{themis@mi.parisdescartes.fr}


\begin{abstract}
Locality-sensitive hashing (LSH) is a well-known solution for approximate nearest neighbor (ANN) search in high-dimensional spaces due to its robust theoretical guarantee on query accuracy. 
Traditional LSH-based methods mainly focus on improving the efficiency and accuracy of the query phase by designing different query strategies, 
but pay little attention to improving the efficiency of the indexing phase. 
They typically fine-tune existing data-oriented partitioning trees to index data points and support their query strategies.
However, their strategy to directly partition the multi-dimensional space is time-consuming, and performance degrades as the space dimensionality increases.
In this paper, we design an encoding-based tree called Dynamic Encoding Tree (DE-Tree) to improve the indexing efficiency and support efficient range queries based on Euclidean distance. 
Based on DE-Tree, we propose a novel LSH scheme called DET-LSH. DET-LSH adopts a novel query strategy, 
which performs range queries in multiple independent index DE-Trees 
to reduce the probability of missing exact NN points, 
thereby improving the query accuracy. 
Our theoretical studies show that DET-LSH enjoys probabilistic guarantees on query accuracy. 
Extensive experiments on real-world datasets demonstrate the superiority of DET-LSH over the state-of-the-art LSH-based methods on both efficiency and accuracy. 
While achieving better query accuracy than competitors, 
DET-LSH achieves up to 6x speedup in indexing time and 2x speedup in query time over the state-of-the-art LSH-based methods. 
\end{abstract}

\maketitle

\blfootnote{$^\ast$ Botao Peng and Xiaodong Lee are the corresponding authors.}

\pagestyle{\vldbpagestyle}
\begingroup\small\noindent\raggedright\textbf{PVLDB Reference Format:}\\
\vldbauthors. \vldbtitle. PVLDB, \vldbvolume(\vldbissue): \vldbpages, \vldbyear.\\
\href{https://doi.org/\vldbdoi}{doi:\vldbdoi}
\endgroup
\begingroup
\renewcommand\thefootnote{}\footnote{\noindent
This work is licensed under the Creative Commons BY-NC-ND 4.0 International License. Visit \url{https://creativecommons.org/licenses/by-nc-nd/4.0/} to view a copy of this license. For any use beyond those covered by this license, obtain permission by emailing \href{mailto:info@vldb.org}{info@vldb.org}. Copyright is held by the owner/author(s). Publication rights licensed to the VLDB Endowment. \\
\raggedright Proceedings of the VLDB Endowment, Vol. \vldbvolume, No. \vldbissue\ %
ISSN 2150-8097. \\
\href{https://doi.org/\vldbdoi}{doi:\vldbdoi} \\
}\addtocounter{footnote}{-1}\endgroup

\ifdefempty{\vldbavailabilityurl}{}{
\vspace{.3cm}
\begingroup\small\noindent\raggedright\textbf{PVLDB Artifact Availability:}\\
The source code, data, and/or other artifacts have been made available at \url{https://github.com/WeiJiuQi/DET-LSH}. 
\endgroup
}
 
\section{Introduction}

\noindent \textbf{Background and Problem.} Nearest neighbor (NN) search in high-dimensional Euclidean spaces is a fundamental problem in various fields, such as database \cite{ferhatosmanoglu2001approximate}, information retrieval \cite{karpukhin2020dense}, data mining \cite{tagami2017annexml}, and machine learning \cite{awale2018polypharmacology}. 
Given a dataset $\mathcal D$ of $n$ data points in $d$-dimensional space $\mathbb{R}^d$ and a query $q$, an NN query returns a point $o^* \in \mathcal D$ which has the minimum Euclidean distance to $q$ among all points in $\mathcal D$. 
However, NN search in high-dimensional datasets is challenging due to the \enquote{curse of dimensionality} phenomenon \cite{hinneburg2000nearest, weber1998quantitative, borodin1999lower}. 
In practice, Approximate Nearest Neighbor (ANN) search is often used as an alternative, 
sacrificing some query accuracy to achieve a huge improvement in efficiency~\cite{lsbforest, fu2016efanna,zeyubulletin-sep23,li2019approximate,annbulletin}. 
Given an approximation ratio $c$ and a query $q \in \mathbb{R}^d$, a $c$-ANN query returns a point $o$ whose distance to $q$ is at most $c$ times the distance between $q$ and its exact NN $o^*$, i.e., $\left\|q,o\right\| \leq c \cdot \left\|q,o^*\right\|$.

\noindent \textbf{Prior Work.} Locality-sensitive hashing (LSH)-based methods are known for their robust theoretical guarantees on the accuracy of query results, 
making them popular in high-dimensional $c$-ANN search \cite{e2lsh, dblsh, c2lsh, qalsh, r2lsh, vhp, lccslsh, srs, pmlsh, lazylsh, eilsh, andoni2015optimal}. 
At the core of LSH-based methods is a family of LSH functions to map points from the original high-dimensional space to low-dimensional projected spaces, and then construct indexes to efficiently support queries, thus reducing the complexity of indexing and querying. 
Thanks to the properties of LSH, points that are close in the original space are more likely to be close in the projected space than those far away~\cite{gionis1999similarity}. 
Therefore, high-quality results can be obtained by only checking the points around the query point in the projected spaces~\cite{datar2004locality}. 
Based on the query strategies, we classify the mainstream LSH-based methods into three categories: 
1) boundary constraint (BC) based methods \cite{e2lsh, lsbforest, sklsh, dblsh}; 
2) collision counting (C2) based methods  \cite{c2lsh, qalsh, r2lsh, vhp, lccslsh}; 
and 3) distance metric (DM) based methods \cite{srs, pmlsh}. 
BC methods map all data points to $L$ independent $K$-dimensional projected spaces, and each projected point is assigned to a hash bucket whose boundary is constrained by a $K$-dimensional hypercube. 
Among $L$ hash tables, two points can be considered \textit{colliding} as long as they are assigned to the same hash bucket at least once. 
Compared with BC methods, which require simultaneous collisions in $K$ dimensions, C2 methods relax the collision condition. 
C2 methods select candidate points whose number of collisions with the query point is greater than a predefined threshold. 
In DM methods, the distance between two points in the projected space can be used to estimate their distance in the original space with theoretical guarantees. 
Therefore, DM methods select candidate points by conducting range queries based on the Euclidean distance metric in the projected space.

\noindent \textbf{Limitations and Motivation.} Efficiency and accuracy are key metrics to evaluate the performance of LSH-based methods in $c$-ANN search. 
Nowadays, new data is produced at an ever-increasing rate, and the size of datasets is continuously growing \cite{DBLP:journals/sigmod/Palpanas15,Palpanas2019,fernandez2020data,wei2023data}. We need to manage large-scale data more efficiently to support further data analysis \cite{wellenzohn2023robust,peng2023efficient,peng2022lan,hydra2}.
However, existing LSH-based methods mainly focus on reducing query time and improving query accuracy by designing different query strategies, but pay little attention to reducing indexing time \cite{dblsh, qalsh, r2lsh, vhp, srs, pmlsh}. 
They typically fine-tune existing data-oriented partitioning trees to index data points and support their query strategies, such as R*-Tree \cite{rstartree} for DB-LSH \cite{dblsh}, PM-Tree \cite{pmtree} for PM-LSH \cite{pmlsh}, and R-Tree \cite{rtree} for SRS \cite{srs}. 
Data-oriented partitioning trees \cite{rtree, mtree, rstartree, pmtree} group nearby data points and partition them into their minimum bounding objects (e.g., hyperrectangle, hypersphere) hierarchically. 
However, partitioning directly in a multi-dimensional space is time-consuming, which limits the efficiency of these methods in the indexing phase. 
In addition, the performance of data-oriented partitioning trees decreases as the space dimensionality increases \cite{bohm2000cost, weber1998quantitative}, which limits the dimensionality of the projected space. 
Therefore, it is necessary to design a more efficient tree structure to address the limitations. 
From another perspective, a more efficient tree structure can also help improve query accuracy, since more trees can be constructed in the same indexing time, and query answering based on more trees can be more accurate. 
For example, the state-of-the-art method among BC methods, DB-LSH~\cite{dblsh}, constructs five R*-Trees to reduce the probability of missing exact NN points in the query phase. 

\noindent \textbf{Our Method.} In this paper, we propose a novel tree structure called Dynamic Encoding Tree (DE-Tree) and a novel LSH scheme called DET-LSH to solve the high-dimensional $c$-ANN search problem more efficiently and accurately. 
First, we present an encoding-based tree called DE-Tree, which divides and encodes each dimension of the projected space independently (as shown in Figure~\ref{Encoding-based Trees}), avoiding to directly partition the multi-dimensional projected space like data-oriented partitioning trees do. 
This idea leads to improved indexing efficiency. 
DE-Tree dynamically encodes projected points based on the dataset's distribution, so that nearby points have more similar encoding representations than distant ones, thereby improving query accuracy.
DE-Tree supports efficient range queries because the upper and lower bound distances between a query point and any DE-Tree node can be easily calculated.
Second, we propose a novel LSH scheme called DET-LSH. 
DET-LSH dynamically encodes $K$-dimensional projected points and then constructs $L$ DE-Trees based on the encoded points. 
We design a two-step query strategy for DET-LSH, which combines the ideas of BC and DM methods.
The first step is to perform range queries in DE-Trees and identify in a coarse-grained way a certain proportion of candidate points that are close to the query point. 
The second step is to calculate the actual distance of each candidate point from the query point in a fine-grained way, then sort the distances and return the final result. 
Intuitively, the coarse-grained filtering improves the query efficiency, and the fine-grained calculation improves query accuracy.
Third, we conduct a rigorous theoretical analysis showing that DET-LSH can correctly answer a $c^2$-$k$-ANN query with a constant probability. 
Furthermore, extensive experiments demonstrate that DET-LSH outperforms existing LSH-based methods in both efficiency and accuracy.

Our main contributions are summarized as follows.

\begin{itemize}
	\item We present a novel encoding-based tree structure called DE-Tree. Compared with data-oriented partitioning trees used in existing LSH-based methods, DE-Tree has better indexing efficiency and can support more efficient range queries based on the Euclidean distance metric.
	\item We propose DET-LSH, a novel LSH scheme based on DE-Tree. We design a novel query strategy for DET-LSH, taking into account both efficiency and accuracy. We provide a theoretical analysis showing that DET-LSH answers a $c^2$-$k$-ANN query with a constant success probability.
	\item We conduct extensive experiments, demonstrating that DET-LSH can achieve better efficiency and accuracy than existing LSH-based methods. While achieving better query accuracy than competitors, DET-LSH achieves up to 6x speedup in indexing time and 2x speedup in query time over the state-of-the-art LSH-based methods.
\end{itemize}

\section{Related Work} \label{chapter2}

\subsection{Mainstream LSH-based Methods}  \label{chapter2.1}

\noindent\textbf{Boundary Constraint based methods (BC).} BC requires $K \cdot L$ hash functions to map all data points to $L$ independent $K$-dimensional projected spaces. 
Each projected point is assigned to a hash bucket whose boundary is constrained by a $K$-dimensional hypercube. 
Among $L$ hash tables, two points can be considered colliding as long as they are assigned to the same hash bucket at least once. 
E2LSH \cite{e2lsh} is the original BC method that adopts LSH functions following the $p$-stable distribution \cite{datar2004locality}. 
E2LSH needs to continuously generate new hash tables when the search radius $r$ gradually increases, which leads to prohibitively large space consumption in indexing. 
To alleviate this issue, LSB-Forest \cite{lsbforest} adopts B-Tree \cite{btree} to index projected points, avoiding building hash tables at different radii. 
SK-LSH \cite{sklsh} proposes a novel index structure based on B$^+$-Tree \cite{btree}, and the search strategy supports it finding better candidates with lower I/O cost. 
However, neither LSB-Forest nor SK-LSH ensures any LSH-like theoretical guarantees since they are based on heuristics. 
DB-LSH \cite{dblsh} is the state-of-the-art BC method with strict theoretical guarantees, which presents a dynamic search framework based on R$^*$-Tree \cite{rstartree}.


\noindent\textbf{Collision Counting based methods (C2).} C2 requires $K^{'} \cdot L^{'}$ hash functions to construct $L^{'}$ independent $K^{'}$-dimensional hash  tables, where $K^{'}<K$ and $L^{'} > L$. 
C2 selects candidate points whose number of collisions is greater than a threshold $t$, where $t < L^{'}$. C2LSH \cite{c2lsh} proposes the C2 scheme and only maintain $K^{'}$ one-dimensional hash tables ($K^{'}=1$). 
C2LSH adopts the \textit{virtual rehashing} technique to count collisions dynamically, reducing index space consumption. 
QALSH \cite{qalsh} improves C2LSH by using B$^+$-Trees to locate points projected into the same bucket, avoiding counting the collision numbers among a large number of points dimension by dimension. 
To further reduce the space consumption of QALSH, R2LSH \cite{r2lsh} and VHP \cite{vhp} are proposed. 
R2LSH maps data points into multiple two-dimensional projected spaces ($K^{'}=2$) and VHP considers hash buckets as virtual hypersphere ($K^{'}>2$). 
LCCS-LSH \cite{lccslsh} proposes a novel search framework, 
which extends C2's method of counting collisions from the number of discrete points to the length of continuous co-substrings.

\noindent\textbf{Distance Metric based methods (DM).} The intuition of DM is that the points close to query $q$ in the original space are also close to query $q$ in the projected space. 
DM requires $K$ hash functions to map data points into a $K$-dimensional projected space. 
SRS~\cite{srs} utilizes R-tree to index projected points and performs exact NN search in the $K$-dimensional projected space. PM-LSH~\cite{pmlsh} designs a range query mechanism based on PM-Tree~\cite{pmtree} to improve query efficiency. 
According to Euclidean distances between queries and points in the projected space, $\beta n+k$ candidates will be selected in PM-LSH, where $\beta$ is an estimated ratio to guarantee the ANN search performance and $n$ is the dataset cardinality.

\subsection{Tree Structure}  \label{chapter2.3}

\noindent\textbf{Data-oriented Partitioning Tree.} As mentioned above, mainstream LSH-based methods adopt data-oriented partitioning trees, such as B-Tree~\cite{btree}, R-Tree~\cite{rtree}, M-Tree~\cite{mtree}, and their variants~\cite{rstartree, pmtree}, to construct indexes and support queries. 
In these methods, data-oriented partitioning trees group nearby data points and hierarchically partition them into their minimum bounding graphics (e.g., hyperrectangle, hypersphere). 
For example, R-Tree and M-Tree partition data points into hyperrectangular and hyperspherical partitions, respectively. 
However, for LSH-based methods, partitioning multi-dimensional projected spaces consumes much time. In addition, with the increase of space dimensionality, the effectiveness of data-oriented partitioning trees decreases~\cite{bohm2000cost, weber1998quantitative}, 
which is also the reason why tree-based methods~\cite{vptree, balltree, rptree, optimisedkdtree} cannot efficiently support ANN search in high-dimensional spaces.

\noindent\textbf{Encoding-based Tree.} Encoding-based trees play an important role in data series similarity search \cite{iSAX2,wang2013data, ads, messi, coconut, peng2018paris,paris+,DBLP:journals/vldb/PengFP21, ulisse, sing, chatzakis2023odyssey, fatourou2023fresh, seanet, echihabi2022hercules, dumpy}. 
Unlike data-oriented partitioning trees, which index a data point directly based on their multi-dimensional coordinates, encoding-based trees independently encode the coordinates of each dimension of the data point into symbolic representations. 
The indexable Symbolic Aggregate approXimation (iSAX)~\cite{isax} is a widely used symbolic representation. 
iSAX divides each dimension into non-uniformly distributed regions and assigns a bit-wise symbol to each region. 
Figure~\ref{isaxencoding} illustrates the encoding process for iSAX representations, 
and Figure~\ref{isaxindex} illustrates an iSAX index based on the representations. 
In practice, iSAX requires only 256 symbols for a very good approximation, so the maximum alphabet cardinality can be represented by 8 bits~\cite{iSAX2}. 
Based on the iSAX representation, several encoding-based trees with different indexing and query strategies are proposed to support data series similarity search \cite{iSAX2, ads, messi, paris+, sing, chatzakis2023odyssey, fatourou2023fresh, seanet, dumpy}. 
The advantages of encoding-based trees can be transferred to LSH-based methods for ANN search. 
Specifically, encoding-based trees divide and encode each dimension of the space independently, avoiding partitioning multi-dimensional projected spaces, improving indexing efficiency. 
In addition, the upper and lower bound distances between two points can be calculated easily using their region boundaries, 
which is suitable for range queries in LSH-based methods, improving query efficiency.

\begin{figure} [t!]
	\centering
	\subfigcapskip=5pt
	\subfigure[Encode data points into iSAX representations.]{
		\includegraphics[width=0.8\linewidth]{./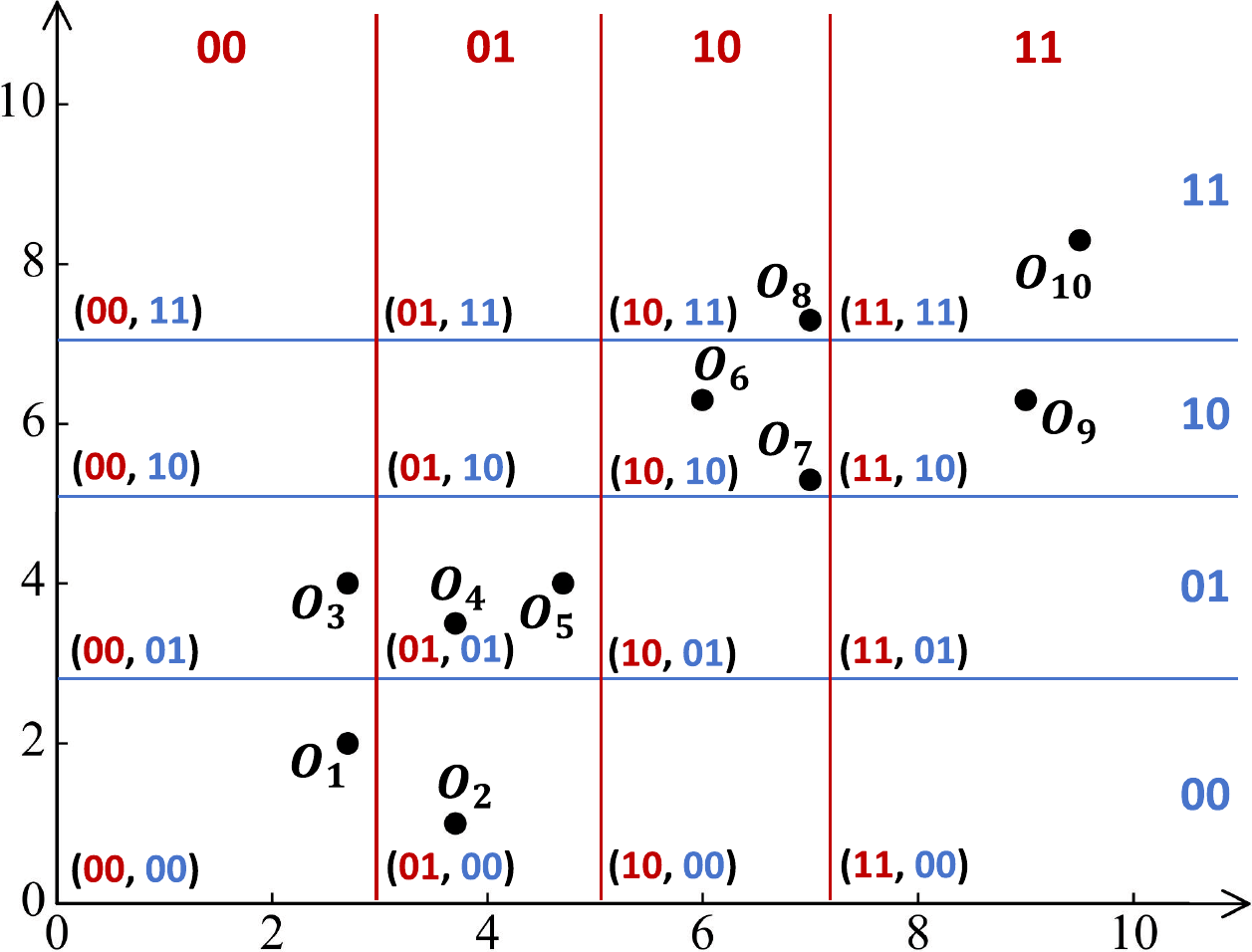}
		\label{isaxencoding}}
	\subfigure[An index based on the iSAX representations.]{
		\includegraphics[width=0.8\linewidth]{./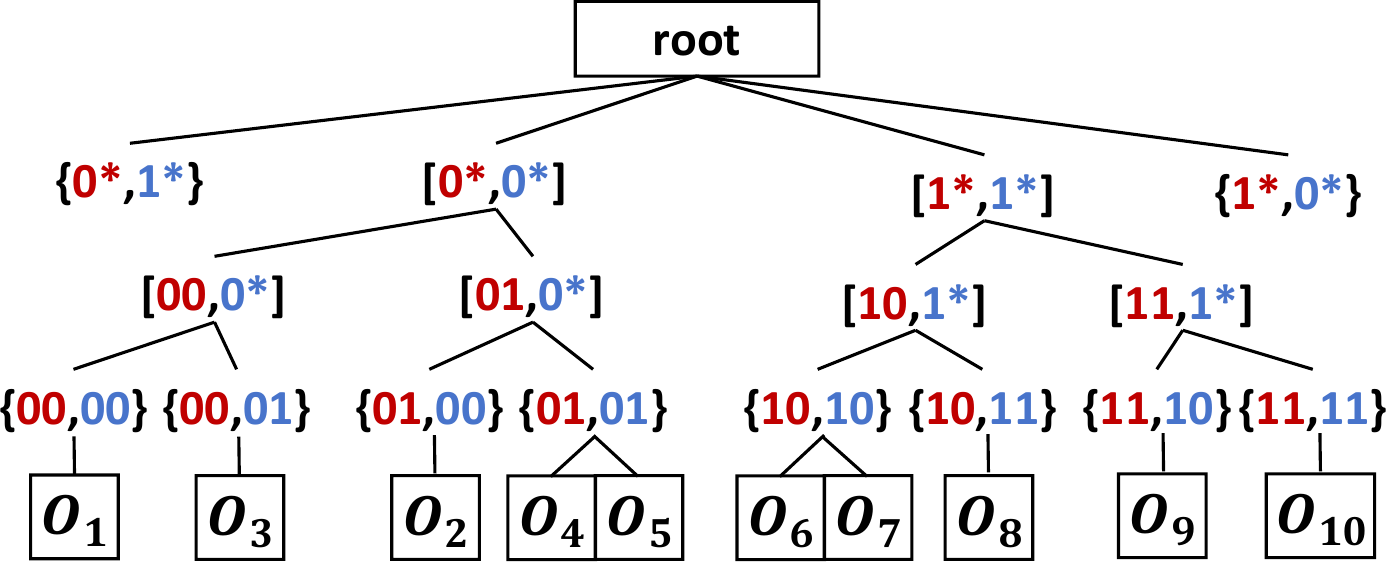}
		\label{isaxindex}}
	\caption{Illustration of an encoding-based tree.}
	\label{Encoding-based Trees}
\end{figure}

\section{Preliminaries} \label{chapter3}

\subsection{Problem Definition}  \label{chapter3.1}

Let $\mathcal D$ be a dataset of points in $d$-dimensional space $\mathbb{R}^d$. 
The dataset cardinality is denoted as $\lvert \mathcal D \rvert=n$, 
and let $\left\|o_1,o_2\right\|$ denote the distance between points $o_1,o_2\in \mathcal D$. 
The query point $q \in \mathbb{R}^d$.

\begin{definition}[$c$-ANN]\label{def1}
	Given a query point $q$ and an approximation ratio $c > 1$, let $o^*$ be the exact nearest neighbor of $q$ in $\mathcal D$. A $c$-ANN query returns a point $o \in \mathcal D$ satisfying $\left\|q,o\right\| \leq c \cdot \left\|q,o^*\right\|$.
\end{definition} 

The $c$-ANN query can be generalized to $c$-$k$-ANN query that returns $k$ approximate nearest points, where $k$ is a positive integer.

\begin{definition}[$c$-$k$-ANN]\label{def2}
	Given a query point $q$, an approximation ratio $c > 1$, and an integer $k$. Let $o^*_i$ be the $i$-th exact nearest neighbor of $q$ in $\mathcal D$. A $c$-$k$-ANN query returns $k$ points $o_1,o_2,...,o_k$. For each $o_i \in D$ satisfying $\left\|q,o_i\right\| \leq c \cdot \left\|q,o^*_i\right\|$, where $i \in [1,k]$.
\end{definition}  

In fact, LSH-based methods do not solve $c$-ANN queries directly because $o^*$ and $\left\|q,o^*\right\|$ is not known in advance \cite{lccslsh, pmlsh, dblsh}. 
Instead, they solve the problem of ($r$,$c$)-ANN proposed in \cite{indyk1998approximate}.

\begin{definition}[($r$,$c$)-ANN]\label{def3}
	Given a query point $q$, an approximation ratio $c > 1$, and a search radius $r$. An ($r$,$c$)-ANN query returns the following result:
	\begin{enumerate}
		\item If there exists a point $o \in \mathcal D$ such that $\left\|q,o\right\| \leq r$, then return a point $o^{\prime} \in \mathcal D$ such that $\left\|q,o^{\prime}\right\| \leq c \cdot r$;
		\item If for all $o \in \mathcal D$ we have $\left\|q,o\right\| > c \cdot r$, then return nothing;
		\item If for the point $o$ closest to $q$ we have $r < \left\|q,o\right\| \leq c \cdot r$, then return $o$ or nothing. 
	\end{enumerate}
\end{definition}

\begin{table}
	\centering
	\caption{Notations}
	\label{table1}
 {\small
	\begin{tabular}{cc}
		\toprule
		\textbf{Notation} & \textbf{Description} \\
		\midrule
		$\mathbb{R}^d$ & $d$-dimensional Euclidean space \\
		$\mathcal D, n$ & Dataset of points in $\mathbb{R}^d$ and its cardinality  $\lvert \mathcal D \rvert$ \\
		$o, q$ & A data point in $\mathcal D$ and a query point in $\mathbb{R}^d$ \\
		$o^{\prime}, q^{\prime}$ & $o$ and $q$ in the projected space \\
		$o^*, o^*_i$ & The first and $i$-th nearest point in $\mathcal D$ to $q$ \\
		$\left\|o_1,o_2\right\|$ & The Euclidean distance between $o_1$ and $o_2$ \\
		$s, s^\prime$ & Abbreviation for $\left\|o_1,o_2\right\|$ and  $\left\|o_1^\prime,o_2^\prime \right\|$ \\
		$h(o)$ & Hash function \\
		$\mathcal H (o)$ & $[h_1(o),...,h_K(o)]$, the coordinates of $o^{\prime}$ \\
		$\mathcal H_i(o)$ & Coordinates of $o^{\prime}$ in the $i$-th project space\\
		$c$ & Approximation ratio \\
		$r, r_{min}$ & Search radius and the initial search radius \\
		$d, K$ & Dimension of the original and the projected space \\
		$L$ & Number of independent projected spaces \\
		\bottomrule
	\end{tabular}
 } 
\end{table}

The $c$-ANN query can be transformed into a series of ($r$,$c$)-ANN queries with increasing radii until a point is returned. 
The search radius $r$ is continuously enlarged by multiplying $c$, i.e., $r=r_{min}, r_{min} \cdot c, r_{min} \cdot c^2,...$, where $r_{min}$ is the initial search radius. In this way, as proven by \cite{indyk1998approximate}, the ANN query can be answered with an approximation ratio $c^2$, i.e., $c^2$-ANN.

\begin{figure*} [tb]
	\flushleft 
	\subfigcapskip=5pt
	\subfigure[Encoding phase and indexing phase.]{
		\includegraphics[width=0.48\linewidth]{./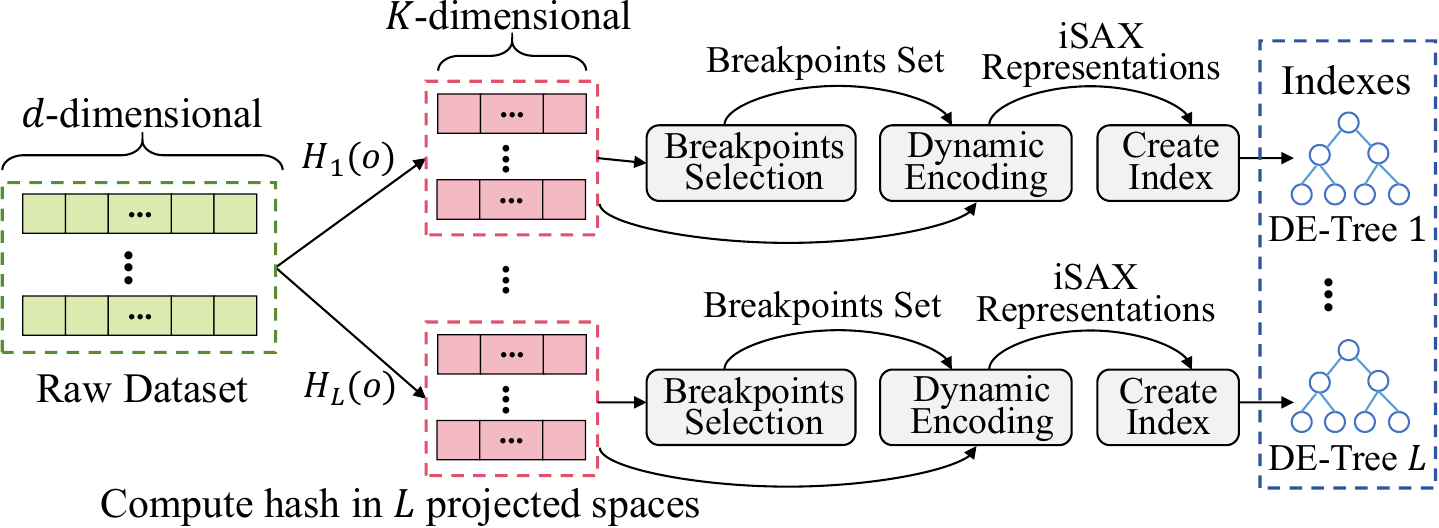}
		\label{overview1}}\hspace{3mm}
	\subfigure[Query phase.]{
		\includegraphics[width=0.48\linewidth]{./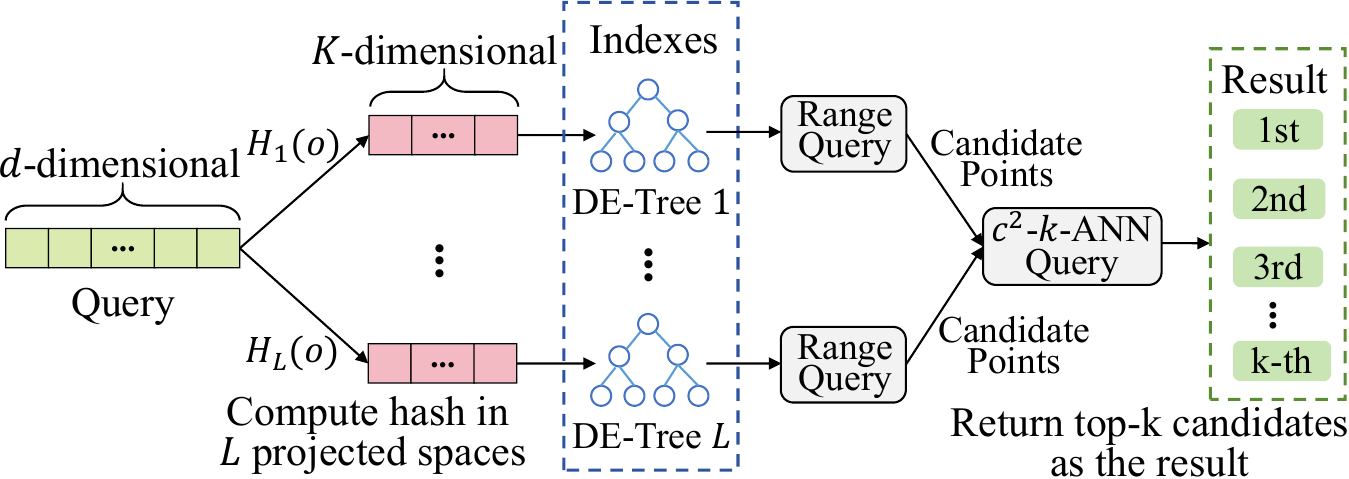}
		\label{overview2}}
	\caption{Overview of the DET-LSH workflow.}
	\label{overview}
\end{figure*}

\subsection{Locality-Sensitive Hashing}  \label{chapter3.2}

The capability of an LSH function $h$ is to project closer data points into the same hash bucket with a higher probability. Formally, the definition of LSH used in Euclidean space is given below \cite{pmlsh, dblsh}: 

\begin{definition}[LSH]\label{def4}
	Given a distance $r$, an approximation ratio $c>1$, a family of hash functions $\mathcal H = \{h:\mathbb{R}^d \rightarrow \mathbb{R}\}$ is called ($r$,$cr$,$p_1$,$p_2$)-locality-sensitive, if for $\forall o_1,o_2 \in \mathbb{R}^d$, it satisfies both of the following conditions: 
	\begin{enumerate}
	   	\item If $\left\|o_1,o_2\right\| \leq r$, $\Pr{[h(o_1)=h(o_2)]} \geq p_1$;
	   	\item If $\left\|o_1,o_2\right\| > cr$, $\Pr{[h(o_1)=h(o_2)]} \leq p_2$,
	\end{enumerate}
	where $h \in \mathcal H$ is randomly chosen, and the probability values $p_1$ and $p_2$ satisfy $p_1>p_2$.
\end{definition} 

A widely adopted LSH family for the Euclidean space is defined as follows \cite{qalsh}:

\begin{equation}  \label{eq1}
	h(o)=\vec{a} \cdot \vec{o},
\end{equation}

\noindent where $\vec{o}$ is the vector representation of a point $o \in \mathbb{R}^d$ and $\vec{a}$ is a $d$-dimensional vector where each entry is independently chosen from the standard normal distribution $\mathcal N(0,1)$.

\subsection{$p$-Stable Distribution and $\chi^2$ Distribution}  \label{chapter3.3}

A distribution $\mathcal T$ is called $p$-stable, if for any $u$ real numbers $v_1,...,v_u$ and identically distributed (i.i.d.) variables $X_1,...,X_u$ following $\mathcal T$ distribution, $\sum_{i=1}^{u}v_iX_i$ has the same distribution as $(\sum_{i=1}^{u}\lvert v_i \rvert ^p)^{1/p} \cdot X$, where $X$ is a random variable with distribution $\mathcal T$ \cite{datar2004locality}. $p$-stable distribution exists for any $p \in (0,2]$ \cite{zolotarev1986one}, and $\mathcal T$ is the normal distribution when $p=2$. 

Let $o^{\prime}= \mathcal H (o)=  [h_1(o),...,h_K(o)]$ denote the point $o$ in the $K$-dimensional projected space. For any two points $o_1,o_2 \in \mathcal D$, let $s=\left\|o_1,o_2\right\|$ and $s^\prime=\left\|o_1^\prime,o_2^\prime\right\|$ denote the Euclidean distances between $o_1$ and $o_2$ in the original space and in the projected space. 

\begin{lemma}\label{lemma1}
	$\frac{s^{\prime2}}{s^2}$ follows the $\chi^2(K)$ distribution.
\end{lemma}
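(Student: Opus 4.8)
The plan is to reduce the claim to the definition of the $\chi^2(K)$ distribution as a sum of $K$ independent squared standard normals. The starting point is the explicit form of the LSH family in Equation~\eqref{eq1}: each coordinate of the projected point is $h_j(o) = \vec{a}_j \cdot \vec{o}$, where the $\vec{a}_j$, for $j \in [1,K]$, are independent $d$-dimensional vectors with i.i.d.\ $\mathcal N(0,1)$ entries. So first I would write, for the two fixed points $o_1, o_2 \in \mathcal D$ and each $j$,
\[
	h_j(o_1) - h_j(o_2) = \vec{a}_j \cdot (\vec{o}_1 - \vec{o}_2) = \sum_{i=1}^{d} a_{j,i}\,(o_{1,i} - o_{2,i}),
\]
which expresses this difference as a linear combination of i.i.d.\ standard normals with coefficients $v_i = o_{1,i} - o_{2,i}$.

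Next I would invoke the $2$-stability of the normal distribution (the $p$-stable property with $p=2$, stated in Section~\ref{chapter3.3}): the sum $\sum_{i=1}^{d} a_{j,i} v_i$ has the same distribution as $\bigl(\sum_{i=1}^{d} |v_i|^2\bigr)^{1/2} X = \|o_1,o_2\|\cdot X = s\cdot X$, where $X \sim \mathcal N(0,1)$. Hence each coordinate difference $h_j(o_1) - h_j(o_2)$ is distributed as $\mathcal N(0, s^2)$, so that $\bigl(h_j(o_1) - h_j(o_2)\bigr)/s \sim \mathcal N(0,1)$. Moreover, since the vectors $\vec{a}_1,\dots,\vec{a}_K$ are mutually independent, these $K$ normalized coordinate differences are mutually independent.

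Finally I would combine these facts. By definition of the Euclidean norm in the projected space,
\[
	\frac{s'^2}{s^2} = \frac{\|o_1' , o_2'\|^2}{s^2} = \sum_{j=1}^{K} \left( \frac{h_j(o_1) - h_j(o_2)}{s} \right)^{2},
\]
which is a sum of $K$ independent squared standard normal random variables, i.e.\ exactly a $\chi^2(K)$ random variable. I do not anticipate a serious obstacle here; the only point requiring care is making the two distributional reductions rigorous — namely, that $2$-stability gives the exact scaling factor $s$ (not merely a Gaussian of some variance), and that independence of the $\vec{a}_j$ transfers to independence of the $K$ summands so that the sum is genuinely $\chi^2(K)$ rather than a sum of dependent squared normals.
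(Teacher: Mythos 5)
Your proof is correct and follows essentially the same route as the paper's: apply $2$-stability of the Gaussian to show each coordinate difference $h_j(o_1)-h_j(o_2)$ is distributed as $s\cdot\mathcal N(0,1)$, then sum the $K$ normalized squares to obtain a $\chi^2(K)$ variable. You are in fact slightly more careful than the paper in explicitly noting that independence of the vectors $\vec{a}_1,\dots,\vec{a}_K$ is what makes the $K$ summands independent, a step the paper's proof leaves implicit.
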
 

\begin{proof}
	Let $h^\prime=h(o_1)-h(o_2)=\vec{a} \cdot (\vec{o_1}-\vec{o_2})=\sum_{i=1}^{d}(o_1[i]-o_2[i]) \cdot a[i]$, where $a[i]$ follows the $\mathcal N(0,1)$ distribution. Since 2-stable distribution is the normal distribution, $h^\prime$ has the same distribution as $(\sum_{i=1}^{d} (o_1[i]-o_2[i])^2)^{1/2} \cdot X=s \cdot X$, where $X$ is a random variable with distribution $\mathcal N(0,1)$. Therefore $\frac{h^\prime}{s}$ follows the $\mathcal N(0,1)$ distribution. Given $K$ hash functions $h_1(\cdot),...,h_K(\cdot)$, we have $\frac{h_1^{\prime2}+...+h_K^{\prime2}}{s^2}=\frac{s^{\prime2}}{s^2}$, which has the same distribution as $\sum_{i=1}^{K}X_i^2$. Thus, $\frac{s^{\prime2}}{s^2}$ follows the $\chi^2(K)$ distribution.
\end{proof}

\begin{lemma}\label{lemma2}
	Given $s$ and $s^\prime$ we have:
	\begin{equation}  \label{eqkafang}
		\Pr{[s^\prime>s\sqrt{\chi^2_{\alpha}(K)}]} = \alpha,
	\end{equation}
	\noindent where $\chi^2_\alpha(K)$ is the upper quantile of a distribution $Y \sim \chi^2(K)$, i.e., $\Pr{[Y > \chi^2_\alpha(K)]} = \alpha$.
\end{lemma}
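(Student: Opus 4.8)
The plan is to derive the statement directly from Lemma~\ref{lemma1}, which already establishes that $s'^2/s^2$ follows the $\chi^2(K)$ distribution; essentially no new probabilistic content is needed. Set $Y = s'^2/s^2$, so that $Y \sim \chi^2(K)$, and recall that by the definition given in the statement, the upper quantile $\chi^2_\alpha(K)$ is characterized by $\Pr[Y > \chi^2_\alpha(K)] = \alpha$. The whole proof is then a matter of rewriting the event $\{s' > s\sqrt{\chi^2_\alpha(K)}\}$ in terms of $Y$.

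First I would use $s > 0$ (the distance between two distinct points; the degenerate case $s = 0$ is vacuous) to divide through by $s$, so that $\{s' > s\sqrt{\chi^2_\alpha(K)}\}$ is the same event as $\{s'/s > \sqrt{\chi^2_\alpha(K)}\}$. Next, since both $s'/s \ge 0$ and $\sqrt{\chi^2_\alpha(K)} \ge 0$, squaring is strictly increasing on $[0,\infty)$, so this event equals $\{s'^2/s^2 > \chi^2_\alpha(K)\} = \{Y > \chi^2_\alpha(K)\}$. Finally, taking probabilities and invoking the definition of the upper quantile gives
\[
\Pr[s' > s\sqrt{\chi^2_\alpha(K)}] = \Pr[Y > \chi^2_\alpha(K)] = \alpha,
\]
which is exactly the claim.

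There is no genuine obstacle here; the statement is a restatement of Lemma~\ref{lemma1} combined with the definition of a quantile. The only points deserving a line of justification are that the division by $s$ is legitimate (which needs $s > 0$) and that squaring preserves the inequality (which needs both sides nonnegative). Both are immediate, so the proof will be just a few lines.
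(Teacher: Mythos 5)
Your proposal is correct and follows essentially the same route as the paper's proof: invoke Lemma~\ref{lemma1} to get $s'^2/s^2 \sim \chi^2(K)$, apply the definition of the upper quantile, and rewrite the event algebraically. The only difference is that you spell out the (immediate) justification for dividing by $s$ and squaring, which the paper compresses into ``transform the formulas.''
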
 

\begin{proof}
	From Lemma \ref{lemma1}, we have 	$\frac{s^{\prime2}}{s^2} \sim \chi^2(K)$. Since $\chi^2_{\alpha}(K)$ is the $\alpha$ upper quantiles of $\chi^2(K)$ distribution, we have $\Pr{[ \frac{s^{\prime2}}{s^2} > \chi^2_{\alpha}(K)]} = \alpha$. Transform the formulas, we have $\Pr{[s^\prime>s\sqrt{\chi^2_{\alpha}(K)}]} = \alpha$.
\end{proof}

\section{The DET-LSH Method} \label{chapter4}

In this section, we present the details of DET-LSH and the design of Dynamic Encoding Tree (DE-Tree). 
DET-LSH consists of three phases: an encoding phase to encode the LSH-based projected points into iSAX representations; 
an indexing phase to construct DE-Trees based on the iSAX representations; 
a query phase to perform range queries in DE-Trees for ANN search.
Figure \ref{overview} provides a high-level overview of the workflow for DET-LSH. 


\begin{algorithm}[tb]
	\caption{Breakpoints Selection}
	\label{getbreakpoints}
	\LinesNumbered 
	\KwIn{Parameters $K$, $L$, $n$, all points in projected spaces $P$, sample size $n_s$, number of regions in each projected space $N_r$}
	\KwOut{A set of breakpoints $B$}
	Initialize $B$ with size $L \cdot K \cdot (N_r+1)$; \\
	\For{$i=1$ to $L$}{
		\For{$j=1$ to $K$}{
			Sample $C_{ij}=[h_{ij}(o_1),...,h_{ij}(o_{n_s})]$ from $P$; \\
			$round \leftarrow \log_2 N_r$; \\
			\For{$z=1$ to $round$}{
				Use \textit{QuickSelect} algorithm and \textit{divide-and-conquer} strategy to find $2^{z-1}$ breakpoints in round $z$; \\
                Store the found breakpoints in $B_{ij}$; \\
			}
			$final\_region\_size \leftarrow \lfloor \frac{n_s}{2^{round}} \rfloor$; \\
			$B_{ij}(1) \leftarrow$ the minimum element from $C_{ij}(1)$ to $C_{ij}(final\_region\_size)$; \\
			$B_{ij}(N_r+1) \leftarrow$ the maximum element from $C_{ij}(n_s-final\_region\_size)$ to $C_{ij}(n_s)$; \\
		}
	}
	\Return $B$; \\
\end{algorithm}

\subsection{Encoding Phase} \label{Encoding Phase}

DET-LSH first encodes projected points into iSAX representations. iSAX uses \emph{breakpoints} to divide each dimension into non-uniform regions, and assigns a bit-wise symbol to each region. 
For example, Figure \ref{isaxencoding} illustrates an iSAX-based encoding process under a two-dimensional space. 
In Figure \ref{isaxencoding}, we use three breakpoints in each dimension to divide it into four regions, each of which can be represented by a 2-bit symbol: 00/01/10/11. 
Therefore, the space is divided into 16 regions, and the points in the same region have the same iSAX representations. 
Figure \ref{isaxindex} shows an index based on the iSAX representations. 
In practice, iSAX only requires 256 symbols in each dimension to get a very good approximation \cite{iSAX2}, which means each dimension can be encoded with an 8-bit alphabet.

\noindent \textbf{Static encoding scheme.} In data series similarity search, traditional iSAX-based methods adopt the static encoding scheme \cite{iSAX2, ads, messi, paris+, sing, chatzakis2023odyssey, fatourou2023fresh}. 
Since normalized data series have highly Gaussian distribution \cite{isax}, they simply determine the breakpoints $b_1,...,b_{a-1}$ such that the area under a $\mathcal N(0,1)$ Gaussian curve from $b_i$ to $b_{i+1}$ is $\frac{1}{a}$, where $b_0$ and $b_a$ are defined as $-\infty$ and $+\infty$. 
Therefore, these breakpoints are static and independent of datasets. 
Existing methods encode a data series by checking which two breakpoints each of its coordinates falls between in a common statistical table. 
However, the datasets for ANN search have arbitrary distributions, so the static encoding scheme is no longer suitable.

\noindent \textbf{Dynamic encoding scheme.} In DET-LSH, we design a dynamic encoding scheme to dynamically select breakpoints based on the distribution of the dataset, aiming to divide data points into different regions as evenly as possible, i.e., each region contains the same number of points. 
Specifically, assuming we have a dataset with cardinality $n$, we first use $K \cdot L$ hash functions to calculate the $K$-dimensional points in $L$ projected spaces, where $\mathcal H_i(o)=[h_{i1}(o),...,h_{iK}(o)]$ denote a point $o$ in the $i$-th projected space. 
Let $ C_{ij}=[h_{ij}(o_1),...,h_{ij}(o_n)]$ denote the set of coordinates of all $n$ points in their $i$-th projected space and $j$-th dimension, where $i=1,...,L$ and $j=1,...,K$. 
We denote $C^{\uparrow}_{ij}$ as a new set in which the elements of $C_{ij}$ are sorted in ascending order, 
and use $C^{\uparrow}_{ij}(t)$ to represent the $t$-th element in $C^{\uparrow}_{ij}$. 
Intuitively, to make points evenly divided into $N_r=256$ regions in each dimension, we can select ordered breakpoints $B_{ij}$ from $C^{\uparrow}_{ij}$, where $B_{ij}(z)=C^{\uparrow}_{ij}(\lfloor \frac{n}{N_r} \rfloor \cdot (z-1))$ and $z=2,...,N_r$. We set $B_{ij}(1)=C^{\uparrow}_{ij}(1)$ and $B_{ij}(N_r+1)=C^{\uparrow}_{ij}(n)$. 
In practice, we dynamically select $N_r+1$ breakpoints for each dimension. 
Then, for any point $o$, each dimension $h_{ij}(o)$ can be independently encoded based on the selected breakpoints $B_{ij}$.

\begin{algorithm}[tb]

	\caption{Dynamic Encoding}
	\label{dynamic_encoding}
	\LinesNumbered 
	\KwIn{Parameters $K$, $L$, $n$, all points in projected spaces $P$, sample size $n_s$, number of regions in each projected space $N_r$}
	\KwOut{A set of encoded points $EP$}
	Initialize $EP$ with size $n \cdot L \cdot K$; \\
	$B \leftarrow$ \textbf{call} BreakpointsSelection($K$,$L$,$n$,$P$,$n_s$,$N_r$); \\
	\For{$i=1$ to $L$}{
		\For{$j=1$ to $K$}{
			\For{$z=1$ to $n$}{
                Obtain $o_z$ from $P$; \\
				Use \textit{BinarySearch} to find integer $b \in [1,N_r]$ such that $B_{ij}(b) \leq h_{ij}(o_z) \leq B_{ij}(b+1)$; \\
				$EP_{ij}(o_z) \leftarrow b$-th symbol in the 8-bit alphabet; \\			
			}
		}
	}
	\Return $EP$; \\
\end{algorithm}

In terms of algorithm design, the intuitive idea is to completely sort $C_{ij}$ to get the exact $C^{\uparrow}_{ij}$ and then select breakpoints from it. 
However, we only need $N_r+1$ discrete elements in $C^{\uparrow}_{ij}$, the complete sorting of $C_{ij}$ is wasteful. 
Therefore, combining the \textit{QuickSelect} algorithm with the \textit{divide-and-conquer} strategy, we design a dynamic encoding scheme based on the unordered $C_{ij}$. 
Algorithm \ref{getbreakpoints} introduces how we dynamically select breakpoints, which is the first step of the encoding scheme. 
To improve efficiency, we randomly sample $n_s$ points from the dataset and select breakpoints based on these sampled points. 
In practice, we set $n_s=0.1n$. 
For each $C_{ij}$, Algorithm \ref{getbreakpoints} obtains breakpoints by running multiple rounds of the \textit{QuickSelect} algorithm combined with the \textit{divide-and-conquer} strategy (lines 6-8). 
For unordered $C_{ij}$, \textit{QuickSelect}($start$, $q$, $end$) can find the $q$-th smallest element between $C_{ij}(start)$ and $C_{ij}(end)$, and move it to the position of $C_{ij}(start+q)$. 
Then, $C_{ij}(start+q)$ is greater than all elements from $C_{ij}(start)$ to $C_{ij}(start+q-1)$ and smaller than all elements from $C_{ij}(start+q+1)$ to $C_{ij}(end)$.
Therefore, we can select a single breakpoint from $C_{ij}$ by running \textit{QuickSelect} once. 
Since we set $N_r=256$, the \textit{divide-and-conquer} strategy can be perfectly applied to our algorithm. 
Specifically, a total of $\log_2 N_r$ rounds need to run, and the $z$-th round select $2^{z-1}$ breakpoints by running \textit{QuickSelect} in $2^{z-1}$ sub-regions generated from the ($z$-1)-th round, where $z=1,...,\log_2 N_r$.
For each $C_{ij}$, we select the minimum element as the first breakpoint $B_{ij}(1)$ and the maximum element as the last breakpoint $B_{ij}(N_r+1)$ (lines 9-11).
In practice, Algorithm \ref{getbreakpoints} achieves 3x speedup in running time over the complete sorting scheme, as shown in Section \ref{selfevaluation}. 
After getting the breakpoint set $B$, Algorithm \ref{dynamic_encoding} will encode all points into iSAX representations and return the set of encoded points $EP$ (lines 3-8).

\begin{algorithm}[tb]

	\caption{Create Index}
	\label{create_index}
	\LinesNumbered 
	\KwIn{Parameters $K$, $L$, $n$, encoded points set $EP$, maximum size of a leaf node $max\_size$}
	\KwOut{A set of DE-Trees: $DETs=[T_1,...,T_L]$}
	\For{$i=1$ to $L$}{
		Initialize $T_i$ and generate $2^{K}$ first layer nodes as the original leaf nodes; \\
		\For{$z=1$ to $n$}{
			$ep_i(o_z) \leftarrow (EP_{i1}(o_z),...,EP_{iK}(o_z))$; \\
			$pos_z \leftarrow$ the position of $o_z$ in the dataset; \\
			$target\_leaf \leftarrow$ leaf node of $T_i$ to insert $\langle ep_i(o_z),pos_z \rangle$; \\
			\While{$sizeof(target\_leaf) \geq max\_size$}{
				SplitNode($target\_leaf$); \\
				$target\_leaf \leftarrow$ the new leaf node to insert $\langle ep_i(o_z),pos_z \rangle$; \\
			}
			Insert $\langle ep_i(o_z),pos_z \rangle$ to $target\_leaf$; \\
		}
	}
	\Return $DETs$; \\
\end{algorithm}

\subsection{Indexing Phase} \label{Indexing Phase}

As mentioned before, DET-LSH requires $L$ DE-Trees to support queries. 
Algorithm~\ref{create_index} presents how to construct $L$ DE-Trees based on the encoded points set $EP$. 
Specifically, for each DE-Tree, the first step is to initialize the first layer nodes, which are the children of the root (line 2). 
As shown in Figure \ref{isaxindex}, according to the iSAX encoding rules, the initial division of each dimension has two cases: $0^*$ and $1^*$. 
Therefore, each DE-Tree has $2^K$ first layer nodes. 
Then, for each point $o_z$, we get its encoded representation $ep_i(o_z)$ for the $i$-th DE-Tree $T_i$ and its position $pos_z$ in the dataset (lines 3-5). 
Based on $ep_i(o_z)$, we can get the leaf node of $T_i$ to insert $\langle ep_i(o_z),pos_z \rangle$ (line 6). 
If the leaf node is full, we split it until we get a new leaf node and insert $\langle ep_i(o_z),pos_z \rangle$ (lines 7-10). 
Note that only leaf nodes contain information about points, such as encoded representations and positions, while internal nodes only contain index information.

\begin{algorithm}[tb]

	\caption{DET Range Query}
	\label{range_query}
	\LinesNumbered 
	\KwIn{A projected query point $q^\prime$, the search radius $r^\prime$, the index DE-Tree $T$, project dimension $K$}
	\KwOut{A set of points $S$}
	Initialize a points set $S \leftarrow \varnothing$; \\
	\For{$i=1$ to $2^K$}{
		$node \leftarrow$ the $i$-th child of root node in $T$; \\
		\textbf{call} TraverseSubtree($node$, $q^\prime$, $r^\prime$, $S$); \\
	}
	\Return $S$;
\end{algorithm}

\begin{algorithm}[tb]

	\caption{Traverse Subtree}
	\label{traversesubtree}
	\LinesNumbered 
	\KwIn{A node $node$, the projected query point $q^\prime$, the search radius $r^\prime$, the set of points $S$}
	$lower\_bound\_dist \leftarrow$ calculate the lower bound distance between $q^\prime$ and $node$; \\
	\If{$lower\_bound\_dist > r^\prime$}{
		break; \\
	} \ElseIf{$node$ is a leaf} {
		$upper\_bound\_dist \leftarrow$ calculate the upper bound distance between $q^\prime$ and $node$; \\
		\If{$upper\_bound\_dist \leq r^\prime$}{
			$S \leftarrow S \, \cup$ all points in $node$;  \\
		} \Else {
			\While{$node$ has next point}{
				Get next point $o \in node$; \\
				$dist \leftarrow$ calculate the distance between $q^\prime$ and the projected $o^\prime$; \\
				\If{$dist \leq r^\prime$}{
					$S \leftarrow S \, \cup o$; \\
				}
			}
		}
	} \Else {
		\textbf{call} TraverseSubtree($node.leftChild$, $q^\prime$, $r^\prime$, $S$); \\
		\textbf{call} TraverseSubtree($node.rightChild$, $q^\prime$, $r^\prime$, $S$); \\
	}
\end{algorithm}

In a DE-Tree, except the root node that has $2^K$ children, other internal nodes have only two children. 
This is because when an internal node needs to be split, we only select one of its $K$ dimensions for further bit-wise binary division. 
For example, in Figure~\ref{isaxindex}, we choose the first dimension of node [$0^*$,$0^*$] to split, 
and the representations of its two children are [$00$,$0^*$] and [$01$,$0^*$]. 
The choice of which dimension to divide is important for splitting nodes. 
Intuitively, splitting a node works better if the obtained two children contain similar numbers of points. 
Therefore, when splitting nodes, we choose the dimension that most evenly divides the points.

\subsection{Query Phase} \label{Query Phase}
 
Since the query strategy of DET-LSH is based on the Euclidean distance metric, range queries can improve the efficiency of obtaining candidate points. 
In a DE-Tree, each space is divided into different regions by multiple breakpoints. 
The breakpoints on all sides of a region can be used to calculate the upper and lower bound distances between two points or between a point and a tree node. 

\noindent \textbf{DET Range Query.} Algorithm~\ref{range_query} is designed for range queries in DE-Tree. 
We select all $2^K$ children of the root node as the entry of the traversal and then traverse their subtrees in order (lines 2-4). 
Algorithm~\ref{traversesubtree} presents how to obtain points within the search radius $r^\prime$ by recursively traversing the subtrees. 
For the node being visited, if its lower bound distance with $q^\prime$ is greater than $r^\prime$, it means that the distance between any point in its subtree and $q^\prime$ is greater than $r^\prime$, so no further traversal is needed (lines 1-3). 
If the upper bound distance between a leaf node and $q^\prime$ is not greater than $r^\prime$, it means that the distance between any data point in the leaf node and $q^\prime$ is not greater than $r^\prime$, so that all points can be added to $S$ (lines 4-7). 
If $r^\prime$ falls within the range of the lower bound distance and the upper bound distance between a leaf node and $q^\prime$, we should traverse the data points in the leaf node and add those within the search radius to $S$ (lines 8-13). 
If the node being visited is not a leaf node and further traversal is required, we need to further traverse its subtrees (lines 14-16).

\begin{algorithm}[tb]
	\caption{($r$,$c$)-ANN Query}
 
	\label{rcann}
	\LinesNumbered 
	\KwIn{A query point $q$, parameters $K$, $L$, $n$, $c$, $r$, $\epsilon$, $\beta$, index DE-Trees $DETs=[T_1,...,T_L]$}
	\KwOut{A point $o$ or $\varnothing$}
	Initialize a candidate set $S \leftarrow \varnothing$; \\
	\For{$i=1$ to $L$}{
		Compute $q_i^\prime=H _i(q)=[h_{i1}(q),...,h_{iK}(q)]$; \\
		$S_i \leftarrow$ \textbf{call} DETRangeQuery($q_i^\prime,\epsilon \cdot r, T_i, K$); \\
		$S \leftarrow S \cup S_i$; \\
		\If{$\lvert S \rvert \geq \beta n+1$}{
			\Return the point $o$ closest to $q$ in $S$; \\
		}
	}
	\If{$\lvert \left\{ o \mid o \in S \land \left\|o,q\right\| \leq c \cdot r \right\} \rvert \geq 1$}{
		\Return the point $o$ closest to $q$ in $S$; \\
	}
	\Return $\varnothing$;
\end{algorithm}

\noindent \textbf{($r$,$c$)-ANN Query.} Algorithm \ref{rcann} shows that DET-LSH can answer an ($r$,$c$)-ANN query with any search radius $r$. 
After the indexing phase, DET-LSH obtains $L$ DE-Trees $T_1,...,T_L$. Given a query $q$, we consider $L$ projected spaces in order. 
For the $i$-th space, we first compute the projected query $q_i^\prime$ (line 3).
Then, we call Algorithm \ref{range_query} to perform a range query in the $i$-th DE-Tree $T_i$ (line 4). 
The search radius in the projected space is $\epsilon \cdot r$. The parameter $\epsilon$ guarantees that if the distance between a point $o$ and $q$ is not greater than $r$, 
then the distance between the projected $o^\prime$ and $q^\prime$ is not greater than $\epsilon \cdot r$ with a constant probability. 
Detailed analysis and proof will be introduced in Lemma \ref{lemma3} in Section \ref{chapter5}.
We continuously add the candidate points obtained by range queries to a candidate set $S$ (line 5). 
If the number of candidate points in $S$ exceeds $\beta n+1$, the point $o$ closest to $q$ will be returned, 
where parameter $\beta$ is the maximum false positive percentage (lines 6-7). 
After completing range queries in $L$ DE-Trees, if the size of $S$ is still smaller than $\beta n+1$ and there is at least one point in $S$ whose distance with $q$ is not greater than $c \cdot r$, then return the point $o$ closest to $q$ in $S$ (lines 8-9). 
Otherwise, the algorithm returns nothing (line 10).
According to Theorem \ref{theorem1}, to be introduced in Section \ref{chapter5}, DET-LSH can correctly answer an ($r$,$c$)-ANN query with a constant probability.

\noindent \textbf{$c^2$-$k$-ANN Query.} Since $o^*$ and $\left\|q,o^*\right\|$ are not known in advance, 
we cannot directly perform an ANN query with a pre-defined $r$ like ($r$,$c$)-ANN query does. 
Instead, we can conduct a series of ($r$,$c$)-ANN queries with increasing radii until enough points are returned. 
Algorithm \ref{ckann} outlines the query processing. 
We can see that most of the steps of Algorithm \ref{ckann} (lines 3-10) are almost the same as Algorithm \ref{rcann} (lines 2-9), except that Algorithm \ref{ckann} needs to consider $k$ when judging conditions and returning results. 
The main difference is that when neither the termination condition at line 8 nor line 10 is satisfied, 
Algorithm \ref{ckann} will enlarge the search radius for the next round of queries (line 11). 
According to Theorem \ref{theorem2}, to be introduced in Section \ref{chapter5}, DET-LSH can correctly answer a $c^2$-$k$-ANN query with a constant probability.

\begin{algorithm}[tb]
	\caption{$c^2$-$k$-ANN Query}

	\label{ckann}
	\LinesNumbered 
	\KwIn{A query point $q$, parameters $K$, $L$, $n$, $c$, $r_{min}$, $\epsilon$, $\beta$, $k$, index DE-Trees $DETs=[T_1,...,T_L]$}
	\KwOut{$k$ nearest points to $q$ in $S$}
	Initialize a candidate set $S \leftarrow \varnothing$ and set $r \leftarrow r_{min}$; \\
	\While{\textit{TRUE}}{
		\For{$i=1$ to $L$}{
			Compute $q_i^\prime=H _i(q)=[h_{i1}(q),...,h_{iK}(q)]$; \\
			$S_i \leftarrow$ \textbf{call} DETRangeQuery($q_i^\prime,\epsilon \cdot r, T_i, K$); \\
			$S \leftarrow S \cup S_i$; \\
			\If{$\lvert S \rvert \geq \beta n+k$}{
				\Return the \textit{top}-$k$ points closest to $q$ in $S$; \\
			}
		}
		\If{$\lvert \left\{ o \mid o \in S \land \left\|o,q\right\| \leq c \cdot r \right\} \rvert \geq k$}{
			\Return the \textit{top}-$k$ points closest to $q$ in $S$; \\
		}
		$r \leftarrow c \cdot r$;
	}
\end{algorithm}

\section{Theoretical Analysis} \label{chapter5}

\subsection{Quality Guarantee}  \label{chapter5.1}

Let $\mathcal H _i(o)=[h_{i1}(o),...,h_{iK}(o)]$ denote a data point $o$ in the $i$-th projected space, where $i=1,...,L$. We define three events as follows:

\begin{itemize}
	\item \textbf{E1:} If there exists a point $o$ satisfying $\left\|o,q\right\| \leq r$, then its projected distance to $q$, i.e., $\left\|\mathcal H _i(o),\mathcal H _i(q)\right\|$, is smaller than $\epsilon r$ for some $i=1,...,L$;
	\item \textbf{E2:} If there exists a point $o$ satisfying $\left\|o,q\right\| > cr$, then its projected distance to $q$, i.e., $\left\|\mathcal H _i(o),\mathcal H _i(q)\right\|$, is smaller than $\epsilon r$ for some $i=1,...,L$;
	\item \textbf{E3:} Fewer than $\beta n$ points satisfying \textbf{E2} in dataset $\mathcal D$.
\end{itemize}

\begin{lemma}\label{lemma3}
	Given $K$ and $c$, setting $L=-\frac{1}{\ln{\alpha_1}}$ and $\beta=2-2\alpha_2^{-\frac{1}{ln\alpha_1}}$ such that $\alpha_1$, $\alpha_2$, and $\epsilon$ satisfy Equation \ref{eq2}, the probability that \textbf{E1} occurs is at least $1-\frac{1}{\mathrm{e}}$ and the probability that \textbf{E3} occurs is at least $\frac{1}{2}$.
	
	\begin{equation}  \label{eq2}
		\epsilon^2=\chi^2_{\alpha_1}(K)=c^2 \cdot \chi^2_{\alpha_2}(K).
	\end{equation}
\end{lemma}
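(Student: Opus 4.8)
The plan is to prove the two claims separately, using Lemma~\ref{lemma2} as the workhorse in both cases and exploiting the independence of the $L$ projected spaces. For \textbf{E1}, I would fix an arbitrary point $o$ with $\left\|o,q\right\| = s \le r$ and write $s_i^\prime = \left\|\mathcal H_i(o),\mathcal H_i(q)\right\|$ for its projected distance to $q$ in the $i$-th space; since the $L$ spaces are built from independent hash vectors, the $s_i^\prime$ are independent. Lemma~\ref{lemma2} with $\alpha = \alpha_1$ gives $\Pr[s_i^\prime > s\sqrt{\chi^2_{\alpha_1}(K)}] = \alpha_1$, and because $s \le r$ together with $\epsilon^2 = \chi^2_{\alpha_1}(K)$ forces $s\sqrt{\chi^2_{\alpha_1}(K)} \le \epsilon r$, this yields $\Pr[s_i^\prime \ge \epsilon r] \le \alpha_1$ in every space. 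Taking the product over the $L$ independent spaces gives $\Pr[\textbf{E1}] \ge 1 - \alpha_1^L$, and substituting $L = -1/\ln\alpha_1$ (rounded up, which only helps) makes $\alpha_1^L = \mathrm{e}^{L\ln\alpha_1} = 1/\mathrm{e}$, so $\Pr[\textbf{E1}] \ge 1 - 1/\mathrm{e}$.

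For \textbf{E3}, I would fix an arbitrary point $o$ with $\left\|o,q\right\| = s > cr$ and call it a false positive if $s_i^\prime < \epsilon r$ for some $i$. The second equality in Equation~\ref{eq2} gives $\epsilon r = cr\sqrt{\chi^2_{\alpha_2}(K)} < s\sqrt{\chi^2_{\alpha_2}(K)}$, so Lemma~\ref{lemma2} with $\alpha = \alpha_2$ gives $\Pr[s_i^\prime < \epsilon r] \le 1 - \alpha_2$, i.e. $\Pr[s_i^\prime \ge \epsilon r] \ge \alpha_2$; by independence the probability that $o$ is a false positive is at most $1 - \alpha_2^L$. Summing over all $n$ data points and using linearity of expectation, the expected number $X$ of false positives satisfies $\mathbb{E}[X] \le n(1 - \alpha_2^L)$, and Markov's inequality gives $\Pr[X \ge \beta n] \le \mathbb{E}[X]/(\beta n) \le (1 - \alpha_2^L)/\beta$. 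Plugging in $L = -1/\ln\alpha_1$, so that $\alpha_2^L = \alpha_2^{-1/\ln\alpha_1}$, together with $\beta = 2 - 2\alpha_2^{-1/\ln\alpha_1} = 2(1 - \alpha_2^L)$, turns the right-hand side into exactly $1/2$; hence $\Pr[\textbf{E3}] = \Pr[X < \beta n] \ge 1/2$.

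The individual computations are routine; the two places that need care are the direction of the inequalities — for near points I push the threshold $s\epsilon$ below $\epsilon r$ using $s \le r$, whereas for far points I use $s > cr$ in the opposite direction — and the fact that Lemma~\ref{lemma2} is invoked with two different tail probabilities, $\alpha_1$ for near points and $\alpha_2$ for far points, which are tied together through the single parameter $\epsilon$ by Equation~\ref{eq2}. The only real obstacle is bookkeeping: verifying that the prescribed $L$ and $\beta$ are precisely the values that collapse the generic bounds $1 - \alpha_1^L$ and $(1 - \alpha_2^L)/\beta$ to the clean constants $1 - 1/\mathrm{e}$ and $1/2$, and being explicit that it is the independence of the $L$ projected spaces that lets me replace a union bound by the products $\alpha_1^L$ and $\alpha_2^L$.
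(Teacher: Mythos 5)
Your proposal is correct and takes essentially the same route as the paper's own proof: apply Lemma~\ref{lemma2} with $\alpha_1$ for near points and with $\alpha_2$ for far points (tied together by Equation~\ref{eq2}), use independence of the $L$ projected spaces to obtain the bounds $1-\alpha_1^L$ and $1-\alpha_2^L$, and finish \textbf{E3} with linearity of expectation plus Markov's inequality, where $\beta = 2\bigl(1-\alpha_2^L\bigr)$ collapses the bound to $\frac{1}{2}$. The only differences are presentational — you state the independence and the direction of each inequality more explicitly than the paper does.
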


\begin{proof}
	Given a point $o$ satisfying $\left\|o,q\right\| \leq r$, let $s=\left\|o,q\right\|$ and $s_i^\prime=\left\|\mathcal H _i(o),\mathcal H _i(q)\right\|$ denote the distances between $o$ and $q$ in the original space and in the $i$-th projected space, where $i=1,...,L$. From Equation \ref{eq2}, we have $\sqrt{\chi^2_{\alpha_1}(K)}=\epsilon$. For each independent projected space, from Lemma \ref{lemma2}, we have $\Pr{[s_i^\prime>s\sqrt{\chi^2_{\alpha_1}(K)}]} = \Pr{[s_i^\prime>\epsilon s]} = \alpha_1$. Since $s \leq r$, $\Pr{[s_i^\prime>\epsilon r]} \leq \alpha_1$. Considering $L$ projected spaces, we have $\Pr{[\textbf{E1}]}\geq1-\alpha_1^L=1-\frac{1}{\mathrm{e}}$. Likewise, given a point $o$ satisfying $\left\|o,q\right\| > cr$, let $s=\left\|o,q\right\|$ and $s_i^\prime=\left\|\mathcal H _i(o),\mathcal H _i(q)\right\|$ denote the distances between $o$ and $q$ in the original space and in the $i$-th projected space, where $i=1,...,L$. From Equation \ref{eq2}, we have $\sqrt{\chi^2_{\alpha_2}(K)}=\frac{\epsilon}{c}$. For each independent projected space, from Lemma \ref{lemma2}, we have $\Pr{[s_i^\prime>s\sqrt{\chi^2_{\alpha_2}(K)}]} = \Pr{[s_i^\prime>\frac{\epsilon s}{c}]} = \alpha_2$. Since $s > cr$, i.e., $\frac{s}{c} > r$, $\Pr{[s_i^\prime>\epsilon r]} > \alpha_2$. Considering $L$ projected spaces, we have $\Pr{[\textbf{E2}]} \leq 1-\alpha_2^L$, thus the expected number of such points in dataset $\mathcal D$ is upper bounded by $(1-\alpha_2^L) \cdot n$. By \textit{Markov's inequality}, we have $\Pr{[\textbf{E3}]}>1-\frac{(1-\alpha_2^L) \cdot n}{\beta n}=\frac{1}{2}.$
	
\end{proof}

\begin{theorem}\label{theorem1}
	Algorithm \ref{rcann} answers an ($r$,$c$)-ANN query with at least a constant probability of $\frac{1}{2}-\frac{1}{\mathrm{e}}$.
\end{theorem}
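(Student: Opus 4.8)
The plan is to combine the two probabilistic guarantees from Lemma~\ref{lemma3} and reason about the behaviour of Algorithm~\ref{rcann} under the joint occurrence of events \textbf{E1} and \textbf{E3}. First I would observe that Algorithm~\ref{rcann} collects into $S$ exactly those points whose projected distance to $q$ in \emph{some} DE-Tree $T_i$ is at most $\epsilon r$; this is what the range query with radius $\epsilon r$ returns. Call a point in $S$ a \emph{false positive} if $\left\|o,q\right\| > cr$. When \textbf{E3} holds, there are fewer than $\beta n$ false positives, and when \textbf{E1} holds, every point $o$ with $\left\|o,q\right\| \le r$ is present in $S$ (it was caught by the range query in at least one $T_i$).

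The core case analysis then mirrors the three cases in Definition~\ref{def3}. In case~(1), there is a point $o$ with $\left\|q,o\right\| \le r$; under \textbf{E1} this point lies in $S$, so $S$ is non-empty. Now either $S$ grows to size $\ge \beta n + 1$ at some point during the loop — in which case line~7 returns the closest point $o$ in $S$, and I need to argue $\left\|q,o\right\| \le c r$: since there are fewer than $\beta n$ false positives under \textbf{E3}, and $|S| \ge \beta n + 1$, at least one point of $S$ is \emph{not} a false positive, i.e.\ has distance $\le c r$, so the minimum over $S$ is also $\le c r$ — or $S$ stays small, in which case the guaranteed point $o$ with $\left\|q,o\right\| \le r \le c r$ witnesses the condition at line~8, and again the closest point in $S$ has distance $\le c r$. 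Either way a valid answer is returned. In case~(2), all points have distance $> c r$, so none of them qualifies at line~8; the only way something is returned is via line~7, but that happens only when $|S| \ge \beta n + 1$, which under \textbf{E3} forces some non-false-positive point to exist, contradicting the case hypothesis — so nothing is returned, as required. Case~(3) is the ``don't care'' case of Definition~\ref{def3}, where returning $o$ or nothing are both acceptable, so no argument is needed beyond noting the algorithm always does one or the other.

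Finally I would assemble the probability bound: the algorithm answers correctly whenever \textbf{E1} and \textbf{E3} both occur. By Lemma~\ref{lemma3}, $\Pr[\textbf{E1}] \ge 1 - \tfrac{1}{\mathrm{e}}$ and $\Pr[\textbf{E3}] \ge \tfrac{1}{2}$, and by a union bound on the complements,
\begin{equation*}
\Pr[\textbf{E1} \wedge \textbf{E3}] \ge 1 - \Pr[\neg\textbf{E1}] - \Pr[\neg\textbf{E3}] \ge 1 - \tfrac{1}{\mathrm{e}} - \tfrac{1}{2} = \tfrac{1}{2} - \tfrac{1}{\mathrm{e}},
\end{equation*}
which is the claimed constant. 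The main obstacle I anticipate is not any single calculation but getting the case analysis airtight — in particular making sure that the early-termination branch (line~7, triggered purely by cardinality) cannot return a ``wrong'' answer in case~(2), and that in case~(1) the returned point genuinely satisfies the $c r$ bound regardless of which of the two termination conditions fires. This hinges entirely on the counting argument ``$|S| \ge \beta n + 1$ together with fewer than $\beta n$ false positives implies a good point is in $S$,'' so I would state that as a small intermediate claim and use it in both case~(1) and case~(2).
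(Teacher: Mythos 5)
Your proposal is correct and follows essentially the same route as the paper's proof: condition on the joint occurrence of \textbf{E1} and \textbf{E3}, use the counting argument that $|S|\ge \beta n+1$ together with fewer than $\beta n$ far points forces a point within distance $cr$ into $S$, check the termination branches of Algorithm~\ref{rcann} against Definition~\ref{def3}, and lower-bound $\Pr[\textbf{E1}\wedge\textbf{E3}]$ by $\frac{1}{2}-\frac{1}{\mathrm{e}}$ (your union bound on complements is the same estimate the paper derives via $\Pr[\textbf{E1}]-\Pr[\overline{\textbf{E3}}]$). The only cosmetic difference is that you organize the cases around Definition~\ref{def3} while the paper organizes them around the algorithm's exit points; the substance is identical.
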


\begin{proof}
	We show that when $\textbf{E1}$ and $\textbf{E3}$ hold at the same time, 
 Algorithm \ref{rcann} returns an correct ($r$,$c$)-ANN result. 
 The probability of \textbf{E1} and \textbf{E3} occurring at the same time can be calculated as $\Pr{[\textbf{E1}\textbf{E3}]}=\Pr{[\textbf{E1}]}-\Pr{[\textbf{E1}\overline{\textbf{E3}}]} > \Pr{[\textbf{E1}]}-\Pr{[\overline{\textbf{E3}}]}=\frac{1}{2}-\frac{1}{\mathrm{e}}$. 
 When $\textbf{E1}$ and $\textbf{E3}$ hold at the same time, if Algorithm \ref{rcann} terminates after getting at least $\beta n+1$ candidate points (line 7),
 due to $\textbf{E3}$, there are at most $\beta n$ points satisfying $\left\|o,q\right\| > cr$. 
 Thus we can get at least one point satisfying $\left\|o,q\right\| \leq cr$, and the returned point is obviously a correct result. 
 If the candidate set $S$ has no more than $\beta n+1$ points, but there exists at least one point in $S$ satisfying $\left\|o,q\right\| \leq cr$, 
 Algorithm \ref{rcann} can also terminate and then return a result correctly (line 9). 
 Otherwise, it indicates that no points satisfying $\left\|o,q\right\| \leq cr$. 
 According to the Definition \ref{def3} of ($r$,$c$)-ANN, nothing will be returned (line 10). 
 Therefore, when $\textbf{E1}$ and $\textbf{E3}$ hold at the same time, Algorithm \ref{rcann} can always correctly answer an ($r$,$c$)-ANN query. 
 In other words, Algorithm \ref{rcann} answers an ($r$,$c$)-ANN query with at least a constant probability of $\frac{1}{2}-\frac{1}{\mathrm{e}}$.
\end{proof}

\begin{theorem}\label{theorem2}
	Algorithm \ref{ckann} answers a $c^2$-$k$-ANN query with at least a constant probability of $\frac{1}{2}-\frac{1}{\mathrm{e}}$.
\end{theorem}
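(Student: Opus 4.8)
The plan is to piggy-back on Theorem~\ref{theorem1}: Algorithm~\ref{ckann} is nothing but Algorithm~\ref{rcann} executed repeatedly on the geometrically growing radii $r_{min}, c\,r_{min}, c^2 r_{min}, \dots$, so I will pin down one ``good'' radius in this sequence and show that the iteration at that radius (or an earlier one) outputs a valid $c^2$-$k$-ANN answer. Fix the query $q$ together with its true nearest neighbours $o^*_1,\dots,o^*_k$, ordered by distance, and let $\hat r$ be the smallest term of the radius sequence with $\hat r \ge \left\|q,o^*_k\right\|$. If $\hat r = r_{min}$ this is the easy degenerate case; otherwise the preceding term falls short of $\left\|q,o^*_k\right\|$, whence $\left\|q,o^*_k\right\| \le \hat r < c\left\|q,o^*_k\right\|$, and this single extra factor of $c$ is exactly where the ``$c^2$'' comes from. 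Since $\left\|q,o^*_i\right\| \le \left\|q,o^*_k\right\| \le \hat r$ for every $i \le k$, I can instantiate Lemma~\ref{lemma3} at radius $\hat r$: with the stated choice of $L,\beta,\epsilon$, for each $i$ the event ``$o^*_i$ is mapped within distance $\epsilon \hat r$ of $q$ in at least one of the $L$ trees'' (a one-point instance of \textbf{E1}) has probability at least $1-\tfrac1{\mathrm e}$, event \textbf{E3} at radius $\hat r$ has probability at least $\tfrac12$, and, exactly as in the proof of Theorem~\ref{theorem1}, \textbf{E1} and \textbf{E3} hold jointly with probability at least $\tfrac12-\tfrac1{\mathrm e}$.

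Next I would verify correctness conditioned on these events. The crucial elementary observation is that if $o^*_i \in S$, then at most $i-1$ points of $S$ can be strictly closer to $q$ than $o^*_i$, so $o^*_i$ is among the $i$ closest points of $S$; hence the $i$-th point returned by Algorithm~\ref{ckann} has distance to $q$ at most $\left\|q,o^*_i\right\| \le c^2\left\|q,o^*_i\right\|$. It therefore suffices to argue that when the algorithm returns, $S$ already contains $o^*_1,\dots,o^*_k$, i.e.\ that each $o^*_i$ was caught by the DET range query of Algorithm~\ref{range_query} before termination. When the loop runs through all $L$ trees at radius $\hat r$ this is immediate from \textbf{E1}. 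The two early-exit branches are then handled as in Theorem~\ref{theorem1}: if the algorithm stops because $\lvert S\rvert \ge \beta n + k$, then by \textbf{E3} at most $\beta n$ of these points lie farther than $c\hat r < c^2\left\|q,o^*_k\right\|$ from $q$, so the $k$ returned points all lie within $c^2\left\|q,o^*_k\right\|$; and if it stops because $S$ contains at least $k$ points within $c\cdot r \le c\hat r$ of $q$, the same bound applies. Finally, if no round up to radius $\hat r$ triggers a return, the radius is multiplied by at most one further factor of $c$ past $\hat r$ before the return condition is met, keeping the ratio below $c^2$, and since $r$ strictly increases each round the loop terminates.

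The main obstacle, and the part needing the most care, is the interaction between the radius sweep and the early-exit test $\lvert S\rvert \ge \beta n + k$: the algorithm may halt at a radius $r<\hat r$, or even inside the $i$-loop before all $L$ trees have been visited, at a moment when some $o^*_i$ has not yet entered $S$, so the simple ``$o^*_i\in S$'' argument above does not directly apply. Resolving this requires (i) the monotonicity of DET range queries in the radius --- the candidate set can only grow as $r$ increases and as more trees are processed, so an early halt can occur only once $c\,r$ already exceeds $\left\|q,o^*_k\right\|$, which pins $r$ into the window where $c\,r < c^2\left\|q,o^*_k\right\|$ --- together with (ii) an \textbf{E3}-style count applied at the \emph{actual} halting radius rather than at $\hat r$, so that among the returned points at least $k$ are still genuine near neighbours and the per-coordinate requirement of Definition~\ref{def2} is met. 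Carefully stitching these together, rather than the bookkeeping for the clean case, is where the real work lies; everything else mirrors the proof of Theorem~\ref{theorem1} line for line.
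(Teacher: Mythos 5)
Your overall strategy is the same as the paper's: condition on \textbf{E1} and \textbf{E3}, inherit the probability $\frac{1}{2}-\frac{1}{\mathrm{e}}$ from the computation in Theorem~\ref{theorem1}, and then do a deterministic case analysis around a threshold radius. The difference is where you plant the anchor. You anchor at $\hat r$, the first radius in the sweep that reaches $r_k^*=\left\|q,o_k^*\right\|$; the paper anchors at $r_0$, defined by the candidate-count transition $\lvert S_{r_0}\rvert < \beta n+k \leq \lvert S_{c\cdot r_0}\rvert$, so that the halt provably happens while processing radius $c\cdot r_0$. The ``real work'' you defer at the end is exactly the paper's Cases 2 and 3, and the $r_0$ anchor is what makes it go through cleanly: for each index $i$, either $r_i^*\leq r_0$, in which case \textbf{E1} has already placed $o_i^*$ in $S_{r_0}\subseteq S_{c\cdot r_0}$ and your ``elementary observation'' gives $\left\|q,o_i\right\|\leq r_i^*$; or $r_i^*>r_0$, in which case the \textbf{E3} count at the halting radius $c\cdot r_0$ yields $\left\|q,o_i\right\|\leq c\cdot(c\cdot r_0)=c^2 r_0 < c^2 r_i^*$. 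This per-index split is the one thing your $\hat r$-based bookkeeping cannot deliver on its own: bounding the returned distances by $c\cdot\hat r < c^2 r_k^*$ satisfies Definition~\ref{def2} only for $i=k$, since $c^2 r_k^*\geq c^2 r_i^*$ for $i<k$. Also, your monotonicity claim (i) --- that an early halt forces $c\,r$ past $r_k^*$ --- does not follow from monotonicity of $S$ alone (a line-8 halt can be triggered by false positives at any radius); it holds only via \textbf{E3} at the halting radius, which is your point (ii). So the two ingredients you list are not independent safeguards but a single argument, and it is the paper's $r_0$ case split. With that substitution your sketch closes; as in the paper, the residual imprecision that \textbf{E1}/\textbf{E3} are radius-indexed events invoked across several radii is present in both versions.
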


\begin{proof}
	We show that when $\textbf{E1}$ and $\textbf{E3}$ hold at the same time, Algorithm \ref{ckann} returns a correct $c^2$-$k$-ANN result. 
 Let $o_i^*$ be the $i$-th exact nearest point to $q$ in $\mathcal D$, we assume that $r_i^* = \left\|o_i^*,q\right\| > r_{min}$, where $r_{min}$ is the initial search radius and $i=1,...,k$. 
 We denote the number of points in the candidate set under search radius $r$ as $\lvert S_r \rvert$. 
 Obviously, when enlarging the search radius $r=r_{min}, r_{min} \cdot c, r_{min} \cdot c^2,...$, there must exist a radius $r_0$ satisfying $\lvert S_{r_0} \rvert < \beta n+k$ and $\lvert S_{c \cdot r_0} \rvert \geq \beta n+k$. 
 The distribution of $r_i^*$ has three cases:
	\begin{enumerate}
		\item \textbf{Case 1:} If for all $i=1,...,k$ satisfying $r_i^* \leq r_0$, which indicates the range queries in all $L$ index trees have been executed at $r=r_0$ (lines 3-8). Due to $\textbf{E1}$, all $r_i^*$ must in $S_{r_0}$. Since $S_{r_0} \subsetneqq S_{c \cdot r_0}$, all $r_i^*$ also must in $S_{c \cdot r_0}$. Therefore, Algorithm \ref{ckann} returns the exact $k$ nearest points $o_i^*$ to $q$.
		\item \textbf{Case 2:} If for all $i=1,...,k$ satisfying $r_i^* > r_0$, all $r_i^*$ not belong to $S_{r_0}$. Since Algorithm \ref{ckann} may terminate after executing range queries in part of $L$ index trees at $r=c \cdot r_0$ (line 8), we cannot guarantee that $r_i^* \leq c \cdot r_0$. However, due to $\textbf{E3}$, there are at least $k$ points $o_i$ in $S_{c \cdot r_0}$ satisfying $\left\|o_i,q\right\| \leq c^2r_0$, $i=1,...,k$. Therefore, we have $\left\|o_i,q\right\| \leq c^2r_0 \leq c^2r_i^*$, i.e., each $o_i$ is a $c^2$-ANN point for corresponding $o_i^*$.
		\item \textbf{Case 3:} If there exists an integer $m \in (1,k)$  such that for all $i=1,...,m$ satisfying $r_i^* \leq r_0$ and for all $i=m+1,...,k$ satisfying $r_i^* > r_0$, indicating that \textbf{Case 3} is a combination of \textbf{Case 1} and \textbf{Case 2}. For each $i \in [1,m]$, Algorithm \ref{ckann} returns the exact nearest point $o_i^*$ to $q$ based on \textbf{Case 1}. For each $i \in [m+1,k]$, Algorithm \ref{ckann} returns a $c^2$-ANN point for $o_i^*$ based on \textbf{Case 2}.
	\end{enumerate}
	Therefore, when $\textbf{E1}$ and $\textbf{E3}$ hold simultaneously, Algorithm \ref{ckann} can always correctly answer a $c^2$-$k$-ANN query, i.e., Algorithm \ref{ckann} returns a $c^2$-$k$-ANN with at least a constant probability of $\frac{1}{2}-\frac{1}{\mathrm{e}}$.
\end{proof}

\subsection{Parameter Settings} \label{chapter5.2}

\begin{figure}[tb] 
	\centering
	\includegraphics[width=0.59\linewidth]{./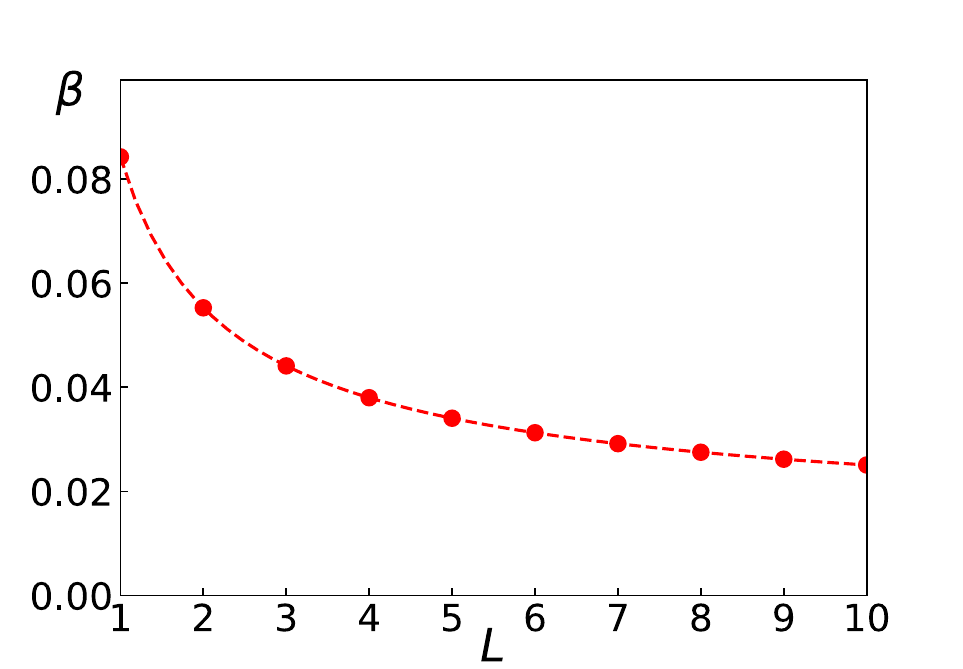}
	\caption{Illustration of the theoretical $\beta$ when $L$ varies (for $K=16$ and $c=1.5$), which is in line with Lemma \ref{lemma3}.}
	\label{betal}
\end{figure}

The performance of DET-LSH is affected by several parameters: $L$, $K$, $\beta$, $c$, and so on. 
According to Lemma \ref{lemma3}, when $L$ and $c$ are set as constants, there is a mathematical relationship between $K$ and $\beta$. 
We set $K=16$ and $c=1.5$ by default, and Figure \ref{betal} shows the theoretical $\beta$ as $L$ changes, which is in line with Lemma \ref{lemma3}. Figure \ref{betal} illustrates that $\beta$ and $L$ have a negative correlation. 
Theoretically, a greater $\beta$ means a higher fault tolerance when querying, so the accuracy of DET-LSH is improved. 
Meanwhile, a greater $L$ means fewer correct results are missed when querying, so the accuracy of DET-LSH can also be improved. 
However, both greater $\beta$ and greater $L$ will reduce query efficiency, so we need to find a balance between $\beta$ and $L$. 
As shown in Figure \ref{betal}, $L=4$ is a good choice because as $L$ increases, $\beta$ drops rapidly until $L=4$, and then $\beta$ drops slowly. 
Therefore, we choose $L=4$ as the default value. 

For the initial search radius $r_{min}$, we follow the selection scheme proposed in \cite{pmlsh}. Specifically, to reduce the number of iterations for different $r$ and terminate the query process faster, 
we find a \enquote{magic} $r_{min}$ that satisfies the following conditions: 1) when $r=r_{min}$ in Algorithm \ref{ckann}, the number of candidate points in $S$ satisfies $\lvert S \rvert \geq \beta n+k$; 
2) when $r=\frac{r_{min}}{c}$ in Algorithm \ref{ckann}, the number of candidate points in $S$ satisfies $\lvert S \rvert < \beta n+k$. 
Since DET-LSH can implement dynamic incremental queries as $r$ increases, the choice of $r_{min}$ is expected to have a relatively small impact on its performance.

\subsection{Complexity Analysis}  \label{chapter5.3}

In the encoding and indexing phases, DET-LSH has time cost $\mathcal{O}(n(d+\log N_r))$, and space cost $\mathcal{O}(n)$.
The time cost comes from four parts: (1) computing hash values for $n$ points, $\mathcal{O}(L \cdot K \cdot n \cdot d)$; 
(2) using Algorithm \ref{getbreakpoints} for breakpoint selection, $\mathcal{O}(L \cdot K \cdot n \cdot \log N_r)$; 
(3) using Algorithm \ref{dynamic_encoding} for encoding, $\mathcal{O}(L \cdot K \cdot n \cdot \log N_r)$; 
and (4) using Algorithm \ref{create_index} for constructing $L$ DE-Trees, $\mathcal{O}(L \cdot n \cdot K \cdot \log N_r)$. 
Since both $K=\mathcal{O}(1)$ and $L=\mathcal{O}(1)$ are constants, the total time cost is $\mathcal{O}(n(d+\log N_r))$. 
Obviously, the size of encoded points and $L$ DE-Trees are both $\mathcal{O}(L \cdot K \cdot n)=\mathcal{O}(n)$.

In the query phase, DET-LSH has time cost $\mathcal{O}(n(\beta d+\log N_r))$. 
The time cost comes from four parts: 
(1) computing hash values for the query point $q$, $\mathcal{O}(L \cdot K \cdot d)=\mathcal{O}(d)$; 
(2) finding candidate points in $L$ DE-Trees, $\mathcal{O}(L \cdot K^2 \cdot \log N_r \cdot \frac{n}{max\_size} + L \cdot K \cdot n)=\mathcal{O}(n\log N_r)$; 
(3) computing the real distance of each candidate point to $q$, $\mathcal{O}(\beta nd)$; 
and (4) finding the $top$-$k$ points to $q$, $\mathcal{O}(\beta n \log k)$. 
The total time cost in the query phase is $\mathcal{O}(n(\beta d+\log N_r))$.

\begin{table}
	\centering
	\caption{Datasets}
	\label{table2}
        \resizebox{\linewidth}{!}{
	\begin{tabular}{cccc}
		\toprule
		\textbf{Dataset} & \textbf{Cardinality} & \textbf{Dimensions} & \textbf{Type} \\
		\midrule
		Msong & 994,185 & 420 & Audio\\
		Deep1M & 1,000,000 & 256 & Image\\
		Sift10M & 10,000,000 & 128 & Image\\
		TinyImages80M & 79.302,017 & 384 & Image\\
		Sift100M & 100,000,000 & 128 & Image\\
		Yandex Deep500M & 500,000,000 & 96 & Image\\
		Microsoft SPACEV500M & 500,000,000 & 100 & Text\\
		Microsoft Turing-ANNS500M & 500,000,000 & 100 & Text\\
		\bottomrule
	\end{tabular}}
\end{table}

\section{Experimental Evaluation} \label{chapter6}

In this section, we self-evaluate DET-LSH, conduct comparative experiments with the state-of-the-art LSH-based methods, and compare with graph-based methods.
Our method is implemented in C and C++ and compiled using -O3 optimization. 
All experiments are conducted using a single thread, on a machine with 2 AMD EPYC 9554 CPUs @ 3.10GHz and 756 GB RAM, running on Ubuntu 22.04.

\subsection{Experimental Setup}

\noindent \textbf{Datasets and Queries.} We use eight real-world datasets for ANN search. 
Table \ref{table2} shows the key statistics of the datasets. Note that the points in \textit{Sift10M} and \textit{Sift100M} are randomly chosen from the \textit{Sift1B} dataset\footnote{http://corpus-texmex.irisa.fr/}. 
Similarly, the points in \textit{Yandex Deep500M}, \textit{Microsoft SPACEV500M}, and \textit{Microsoft Turing-ANNS500M} are also randomly chosen from their 1B-scale datasets\footnote{https://big-ann-benchmarks.com/neurips21.html}. 
We randomly select 100 data points as queries and remove them from the original datasets.

\noindent \textbf{Evaluation Measures.} We adopt five measures to evaluate the performance of all methods: index size, indexing time, query time, recall, and overall ratio.
For a query $q$, we denote the result set as $R=\{o_1,...,o_k\}$ and the exact $k$-NNs as $R^*=\{o_1^*,...,o_k^*\}$, recall is defined as $\frac{\lvert R \cap R^* \rvert}{k}$ and overall ratio is defined as $\frac{1}{k} \sum_{i=1}^{k} \frac{\left\|q,o_i\right\|}{\left\|q,o_i^*\right\|}$ \cite{dblsh}.

%

\noindent \textbf{Benchmark Methods.} We compare DET-LSH with three state-of-the-art LSH-based in-memory methods mentioned in Section \ref{chapter2}, i.e., DB-LSH \cite{dblsh}, LCCS-LSH \cite{lccslsh}, and PM-LSH \cite{pmlsh}. 
Moreover, to study the capability of DE-Tree and the advantages of LSH, we use a single DE-Tree to index points without LSH for ANN searches. 
We call this method DET-ONLY. 
Since DET-ONLY is not based on LSH, we adopt the Piecewise Aggregate Approximation (PAA) \cite{keogh2001} technique to reduce the dimensionality of points. 
PAA divides a $d$-dimensional point into $K$ segments of equal length $\lfloor \frac{d}{K} \rfloor$ and 
uses the mean value of the coordinates in each segment to summarize the point. 
DET-ONLY adopts the same query strategy as DET-LSH.
To study the characteristics of LSH-based methods and graph-based methods, we also compare DET-LSH with two state-of-the-art graph-based methods, i.e., HNSW \cite{malkov2018efficient} and LSH-APG \cite{lshapg}.

\noindent \textbf{Parameter Settings.} $k$ in $k$-ANN is set to 50 by default.
For DET-LSH, the parameters are set as described in Section \ref{chapter5.2}. 
For competitors, the parameter settings follow their source codes or papers. 
To make a fair comparison, we set $\beta=0.1$ and $c=1.5$ for DET-LSH, DB-LSH, PM-LSH, and DET-ONLY. 
For DB-LSH, $L=5$, $K=12$, $w=4c^2$. 
For LCCS-LSH, $m=64$. For PM-LSH, $s=5$, $m=15$. For DET-ONLY, $K=16$, $L=1$.
For HNSW, $M=48$, $ef=100$. For LSH-APG, $K=16$, $L=2$, $T=24$, $T^{\prime}=2T$, $p_{\tau}=0.95$.

\begin{figure}[tb] 
	\centering
	\includegraphics[width=\linewidth]{./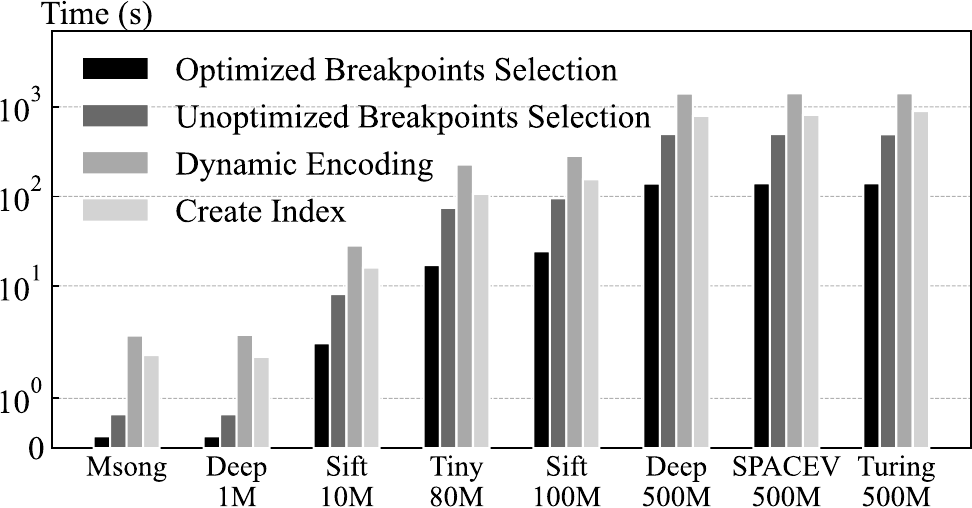}
	\caption{Running time break-down for the DET-LSH encoding and indexing phases.}
	\label{encodingandindexing}
\end{figure}

\begin{figure}[tb] 
	\centering
	\includegraphics[width=\linewidth]{./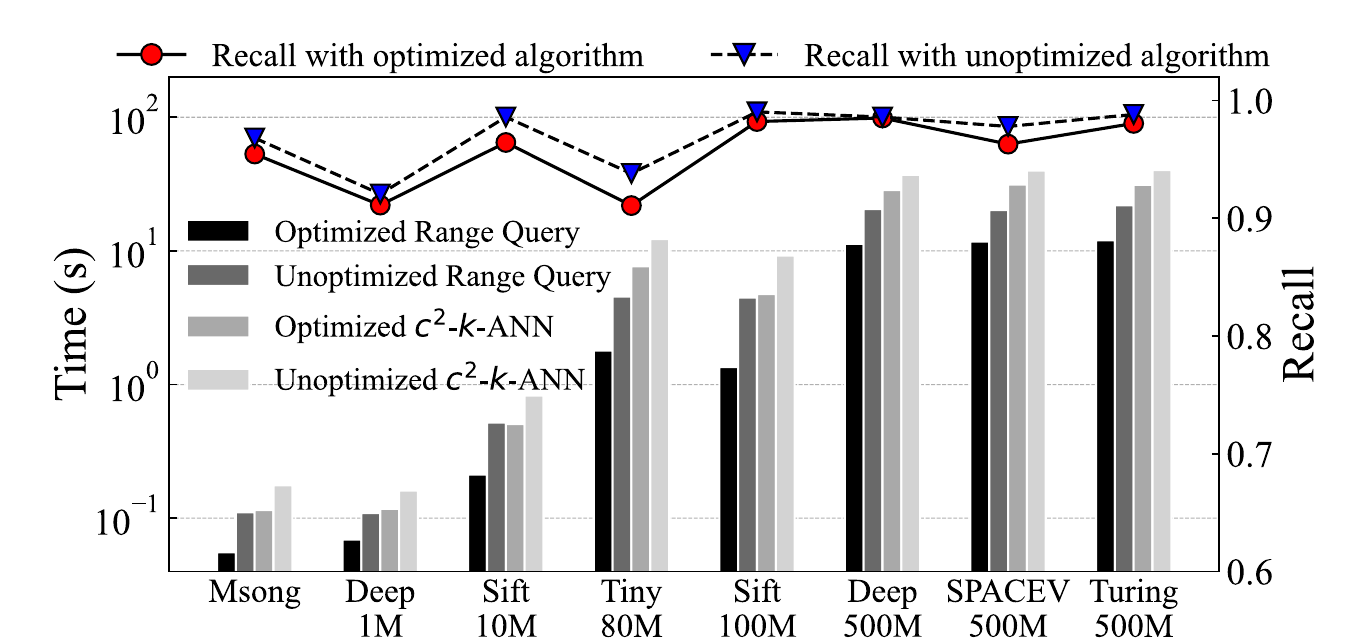}
	\caption{Running time and recall of optimized/non-optimized query-phase algorithms of DET-LSH.}
	\label{querytime}
\end{figure}

\begin{figure}[tb] 
	\centering
	\includegraphics[width=\linewidth]{./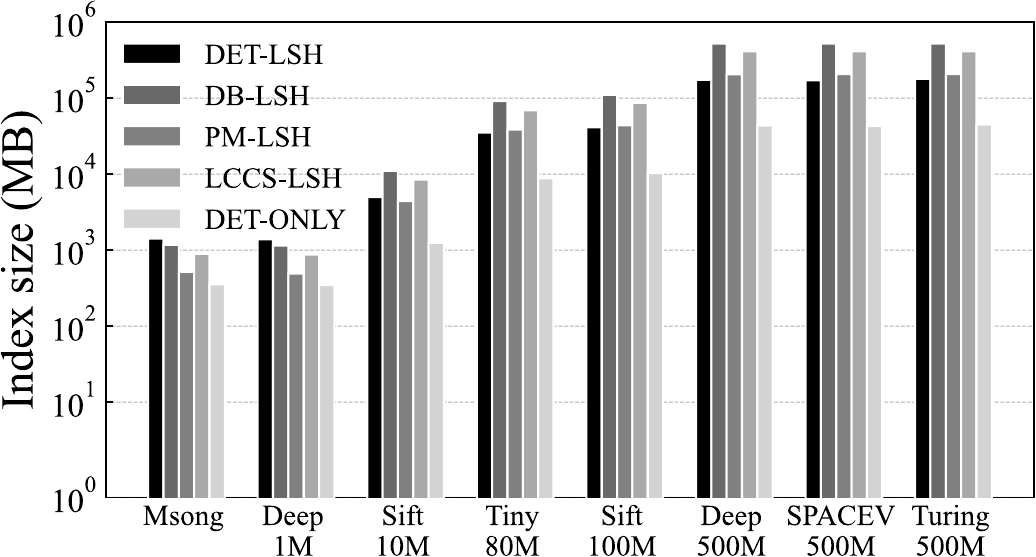}
	\caption{Index size for all datasets.}
	\label{indexsize}
\end{figure}

\begin{table*}[]
	\centering
	\caption{Performance comparison with competitors (the best value in each row is highlighted in bold; the number in parentheses indicates how many times slower a method is than the best method).}
	\label{table3}
\begin{threeparttable}
{\small
	\begin{tabular}{|cc|m{1.9cm}<{\centering}|m{1.9cm}<{\centering}|m{1.9cm}<{\centering}|m{1.9cm}<{\centering}|m{1.9cm}<{\centering}|}
		\hline
\multicolumn{2}{|c|}{}                                                                        & \textbf{DET-LSH} & \textbf{DB-LSH} & \textbf{PM-LSH} & \textbf{LCCS-LSH} & \textbf{DET-ONLY} \\ \hline
\multicolumn{1}{|c|}{\multirow{4}{*}{\textbf{Msong}}}                     & Query Time (ms)   & 112.97 (1.43)  & 118.10 (1.49)          & 120.36 (1.53)          & 170.13 (2.16)            & \textbf{78.87}             \\ \cline{2-7} 
\multicolumn{1}{|c|}{}                                                    & Recall            & \textbf{0.9546}  & 0.9474          & 0.949           & 0.849             & 0.891             \\ \cline{2-7} 
\multicolumn{1}{|c|}{}                                                    & Overall Ratio     & \textbf{1.0012}  & 1.0013          & 1.0013          & 1.0035            & 1.0046            \\ \cline{2-7} 
\multicolumn{1}{|c|}{}                                                    & Indexing Time (s) & 4.654 (3.90)   & 4.974 (4.17)          & 2.950 (2.47)           & 28.925 (24.2)            & \textbf{1.194}             \\ \hline
\multicolumn{1}{|c|}{\multirow{4}{*}{\textbf{Deep1M}}}                    & Query Time (ms)   & 109.28 (1.37)  & 117.79 (1.48)          & 207.37 (2.61)         & 136.21 (1.71)           & \textbf{79.51}             \\ \cline{2-7} 
\multicolumn{1}{|c|}{}                                                    & Recall            & \textbf{0.9112}  & 0.8552          & 0.857           & 0.848             & 0.818             \\ \cline{2-7} 
\multicolumn{1}{|c|}{}                                                    & Overall Ratio     & \textbf{1.0022}  & 1.0038          & 1.0042          & 1.0039            & 1.0061            \\ \cline{2-7} 
\multicolumn{1}{|c|}{}                                                    & Indexing Time (s) & 4.647 (3.97)   & 4.809 (4.10)           & 2.991 (2.55)          & 57.652 (49.2)           & \textbf{1.172}             \\ \hline
\multicolumn{1}{|c|}{\multirow{4}{*}{\textbf{Sift10M}}}                   & Query Time (ms)   & 506.34 (1.20)  & 944.23 (2.25)         & 1482.84 (3.53)        & 1905.09 (4.53)          & \textbf{420.43}            \\ \cline{2-7} 
\multicolumn{1}{|c|}{}                                                    & Recall            & \textbf{0.9644}  & 0.9438          & 0.9338          & 0.8924            & 0.886             \\ \cline{2-7} 
\multicolumn{1}{|c|}{}                                                    & Overall Ratio     & \textbf{1.0009}  & 1.0015          & 1.0016          & 1.0021            & 1.0035            \\ \cline{2-7} 
\multicolumn{1}{|c|}{}                                                    & Indexing Time (s) & 44.435 (4.00)  & 64.861 (5.85)         & 80.099 (7.22)         & 509.417 (45.9)          & \textbf{11.094}            \\ \hline
\multicolumn{1}{|c|}{\multirow{4}{*}{\textbf{TinyImages80M}}}             & Query Time (ms)   & 7676.23 (1.00) & 8164.96 (1.07)        & 13672.6 (1.79)        & 11272.8 (1.47)          & \textbf{7657.08}           \\ \cline{2-7} 
\multicolumn{1}{|c|}{}                                                    & Recall            & \textbf{0.9108}  & 0.9056          & 0.8822          & 0.87              & 0.8338            \\ \cline{2-7} 
\multicolumn{1}{|c|}{}                                                    & Overall Ratio     & \textbf{1.0016}  & \textbf{1.0016}          & 1.0023          & 1.0019            & 1.0036            \\ \cline{2-7} 
\multicolumn{1}{|c|}{}                                                    & Indexing Time (s) & 335.419 (4.08) & 641.988 (7.81)        & 1471.31 (17.89)        & 12128.1 (147.5)          & \textbf{82.235}            \\ \hline
\multicolumn{1}{|c|}{\multirow{4}{*}{\textbf{Sift100M}}}                  & Query Time (ms)   & 4757.76 (1.20) & 11064.8 (2.78)        & 15722.8 (3.95)        & 24221.8 (6.08)          & \textbf{3983.41}           \\ \cline{2-7} 
\multicolumn{1}{|c|}{}                                                    & Recall            & \textbf{0.9822}  & 0.9652          & 0.944           & 0.892             & 0.8848            \\ \cline{2-7} 
\multicolumn{1}{|c|}{}                                                    & Overall Ratio     & \textbf{1.0005}  & 1.0007          & 1.0013          & 1.0019            & 1.0034            \\ \cline{2-7} 
\multicolumn{1}{|c|}{}                                                    & Indexing Time (s) & 439.434 (4.04) & 952.773 (8.76)         & 1922.7 (17.67)         & 7519.43 (69.1)          & \textbf{108.782}           \\ \hline
\multicolumn{1}{|c|}{\multirow{4}{*}{\textbf{\makecell[c]{Yandex \\ Deep500M} }}}           & Query Time (ms)   & 28546.6 (1.09) & 61657.9 (2.35)        & 91724.2 (3.50)        & 62411.8 (2.38)          & \textbf{26200.4}           \\ \cline{2-7} 
\multicolumn{1}{|c|}{}                                                    & Recall            & \textbf{0.9852}  & 0.9644          & 0.9298          & 0.9506            & 0.9176            \\ \cline{2-7} 
\multicolumn{1}{|c|}{}                                                    & Overall Ratio     & \textbf{1.0003}  & 1.0009          & 1.0032          & 1.0009            & 1.0058            \\ \cline{2-7} 
\multicolumn{1}{|c|}{}                                                    & Indexing Time (s) & 2263.87 (4.22) & 17182.7 (32.04)        & 13685.2 (25.52)        & 85968.3 (160.3)          & \textbf{536.262}           \\ \hline
\multicolumn{1}{|c|}{\multirow{4}{*}{\textbf{\textbf{\makecell[c]{Microsoft \\ SPACEV500M} } }}}      & Query Time (ms)   & 31404.3 (1.07) & 66632.3 (2.28)        & 94868.3 (3.25)        & 70697.5 (2.42)          & \textbf{29212.6}           \\ \cline{2-7} 
\multicolumn{1}{|c|}{}                                                    & Recall            & \textbf{0.963}   & 0.9492          & 0.9568          & 0.9198            & 0.8978            \\ \cline{2-7} 
\multicolumn{1}{|c|}{}                                                    & Overall Ratio     & \textbf{1.0008}  & 1.0012          & 1.0011          & 1.0026            & 1.00336           \\ \cline{2-7} 
\multicolumn{1}{|c|}{}                                                    & Indexing Time (s) & 2204.94 (4.21) & 16114.7 (30.77)        & 13189.5 (25.19)        & 87591.1 (167.3)          & \textbf{523.662}           \\ \hline
\multicolumn{1}{|c|}{\multirow{4}{*}{\textbf{\makecell[c]{Microsoft \\ Turing-ANNS500M} }}} & Query Time (ms)   & 31280.1 (1.04) & 68636.6 (2.28)        & 106987 (3.55)         & 73618.2 (2.44)          & \textbf{30127.2}           \\ \cline{2-7} 
\multicolumn{1}{|c|}{}                                                    & Recall            & \textbf{0.9806}  & 0.9604          & 0.9636          & 0.9404            & 0.9008            \\ \cline{2-7} 
\multicolumn{1}{|c|}{}                                                    & Overall Ratio     & \textbf{1.0005}  & 1.0012          & 1.0009          & 1.0012            & 1.0043            \\ \cline{2-7} 
\multicolumn{1}{|c|}{}                                                    & Indexing Time (s) & 2301.02 (4.22) & 16408.2 (30.11)        & 12680.2 (23.27)        & 79162.5 (145.3)          & \textbf{545.006}           \\ \hline
	\end{tabular}
 } 
\end{threeparttable}
\end{table*}

\subsection{Self-evaluation of DET-LSH} \label{selfevaluation}


\subsubsection{Encoding and Indexing Phase} \label{encodeoptimize}

Figure \ref{encodingandindexing} shows the specific running time of each algorithm in the encoding and indexing phases. 
We have the following observations: 
(1) \textit{Dynamic Encoding} (Algorithm \ref{dynamic_encoding}) takes longer time than \textit{Create Index} (Algorithm \ref{create_index}). 
Although we have optimized the process of locating regions when encoding through binary search, 
it still takes much time to locate a specific region from 256 regions for each dimension of each projected point. 
(2) Optimized \textit{Breakpoints Selection} (Algorithm \ref{getbreakpoints}) achieves 3x speedup in running time over the unoptimized algorithm. 
As mentioned in Section \ref{Encoding Phase}, we use \textit{QuickSelect} algorithm with \textit{divide-and-conquer} strategy to avoid complete sorting, 
thus reducing the time complexity from $\mathcal{O}(n\log n)$ to $\mathcal{O}(n\log N_r)$.

\subsubsection{Query Phase} \label{queryoptiize}

In practice, in Algorithm \ref{traversesubtree}, if the upper bound distance between a leaf node and $q^\prime$ is greater than the search radius, it will take much time to calculate the distance between each point in the leaf node and $q^\prime$ (lines 8-13).
After experiments, we found that if the leaf node size $max\_size$ is appropriately set in Algorithm \ref{create_index}, 
most of the points in these \enquote{troublesome} leaf nodes are within the search radius. 
Therefore, we optimized Algorithm \ref{traversesubtree} in two aspects: (1) We relax the requirements for candidate points to improve efficiency. 
In our implementation, as long as the lower bound distance between a leaf node and $q^\prime$ is not greater than $r$, we will add all its points to $S$. 
(2) We maintain a priority queue to hold traversed leaf nodes based on their lower bound distances to $q^\prime$. 
A leaf node with a smaller lower bound distance to $q^\prime$ can add all its points to $S$ earlier, guaranteeing the quality of candidate points. 
As shown in Figure \ref{querytime}, with an acceptable sacrifice of query accuracy, optimized Algorithm \ref{range_query} and Algorithm \ref{ckann} improve query efficiency by up to 50\% and 30\%.

\begin{figure*}[tb] 
	\centering
	\includegraphics[width=0.92\linewidth]{./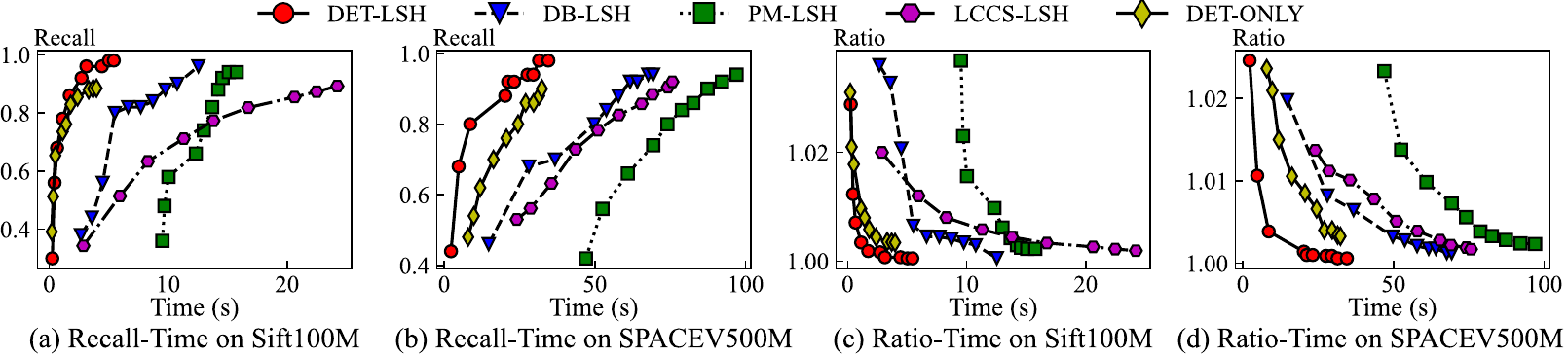}
	\caption{Recall-time and overall ratio-time curves.}
	\label{recalltimeandratiotime}
\end{figure*}

\begin{figure*}[tb] 
	\centering
	\includegraphics[width=0.92\linewidth]{./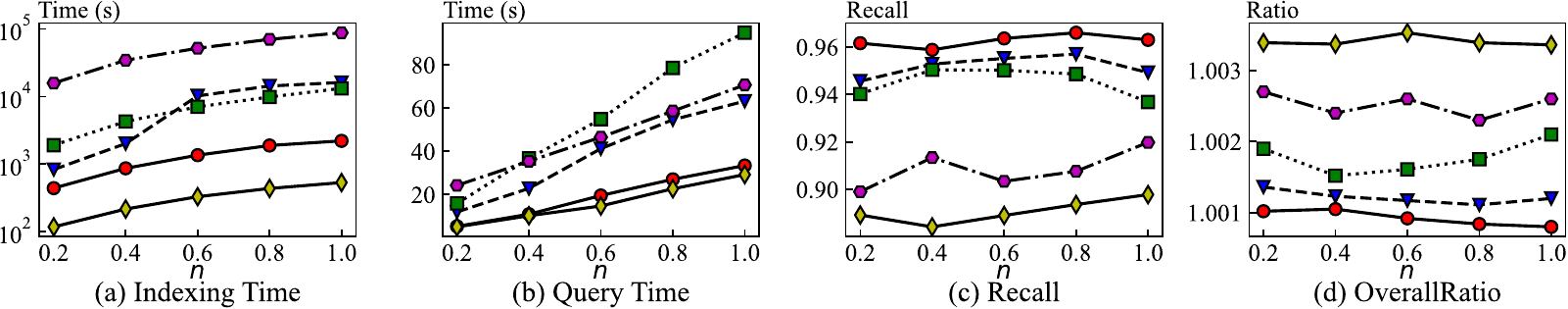}
	\caption{Scalability: performance under different $n$ on Microsoft SPACEV500M.}
	\label{scalability}
\end{figure*}

\begin{figure*}[tb] 
	\centering
	\includegraphics[width=0.92\linewidth]{./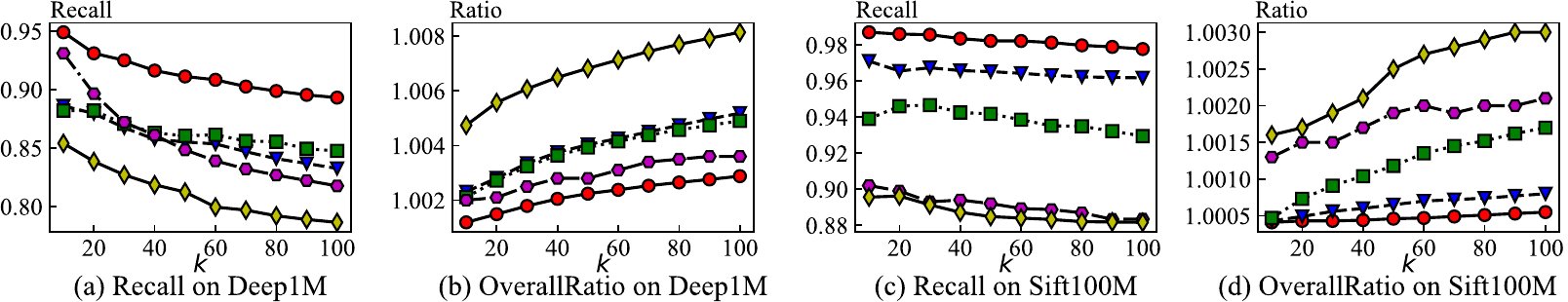}
	\caption{Performance under different $k$.}
	\label{diffk}
\end{figure*}

\subsection{Comparison with Competitors}

\subsubsection{Indexing Performance} 
Figure \ref{indexsize} and Table \ref{table3} show the comparison between all methods with default parameter settings on all datasets.
To ensure fairness, for DET-LSH and DET-ONLY, the time of the encoding phase is included in the indexing time. 
We make the following observations: 
(1) DET-LSH has the best indexing efficiency compared to all LSH-based methods. 
The reason is that DB-LSH and PM-LSH use data-oriented partitioning trees to construct indexes. 
It is time-consuming to partition a multi-dimensional projected space. 
DET-LSH adopts DE-Trees to construct indexes, which divide and encode each dimension of the projected space independently, thereby improving indexing accuracy. 
LCCS-LSH has a significantly longer indexing time compared to other methods because of building its proposed data structure Circular Shift Array (CSA). 
(2) The advantage of DET-LSH's indexing efficiency increases with the dataset cardinality.
When $n$ is not greater than 1M, the indexing time of DET-LSH is longer than that of PM-LSH, because DET-LSH constructs 4 DE-Trees, 
while PM-LSH only constructs one PM-Tree. 
As $n$ increases from 10M to 500M, DET-LSH achieves from 2x speedup to 6x speedup in indexing time over other methods. 
The reason is that the construction time of a DE-Tree increases linearly with $n$.
(3) With respect to index size, DET-LSH is not very competitive on small-scale datasets, but its design proves advantageous for large-scale datasets. 
The reason is that DET-LSH only saves the iSAX representation of each data point in the DE-Tree (not the original data point or the LSH-projected data point). 
Each iSAX representation is stored as an \enquote{unsigned char}, which takes up only one byte. 
Yet, DET-LSH builds 4 DE-Trees, which restricts its advantage on small-scale datasets.
As the dataset size increases, the advantage of DET-LSH becomes more pronounced.
(4) The index size and indexing time of DET-ONLY are always about one-quarter of that of DET-LSH. 
The reason is that DET-ONLY only constructs one DE-Tree, while DET-LSH constructs 4 DE-Trees. 

\subsubsection{Query Performance}

We study query performance based on the query time, recall, and overall ratio shown in Table~\ref{table3}, and the Recall-Time and OverallRatio-Time curves shown in Figure~\ref{recalltimeandratiotime}. 
We have the following observations: (1) DET-LSH outperforms all LSH-based methods on both efficiency and accuracy 
(DET-ONLY is not an LSH-based method). 
As shown in Table~\ref{table3}, DET-LSH has a shorter query time, higher recall, and smaller ratio on all datasets. 
As $n$ increases, DET-LSH achieves up to 2x speedup in query time over other LSH-based methods. 
The reason is that closer points have similar encoding representations in DE-Tree so that range queries can obtain higher-quality candidate points in a shorter time. 
(2) The query efficiency of DET-ONLY is slightly better than DET-LSH, but the query accuracy is significantly lower than DET-LSH. 
Since DET-LSH performs queries on 4 DE-Trees, querying is more expensive in terms of time cost, but more accurate than DET-ONLY, which uses a single DE-Tree to answer queries. 
DET-LSH can control the trade-off between accuracy and efficiency by adjusting the number of DE-Trees, while DET-ONLY cannot do this.
The performance of DET-ONLY shows that it is not suitable to support accurate ANN queries, demonstrating the importance of the of the LSH component to guarantee query accuracy.
Overall, DET-LSH is more advantageous than DET-ONLY.
(3) DET-LSH achieves the best trade-off between efficiency and accuracy. As shown in Figure~\ref{recalltimeandratiotime}, 
compared with other LSH-based methods, DET-LSH consumes the least time to achieve the same recall or overall ratio.

\begin{figure}[tb] 
	\centering
	\includegraphics[width=0.94\linewidth]{./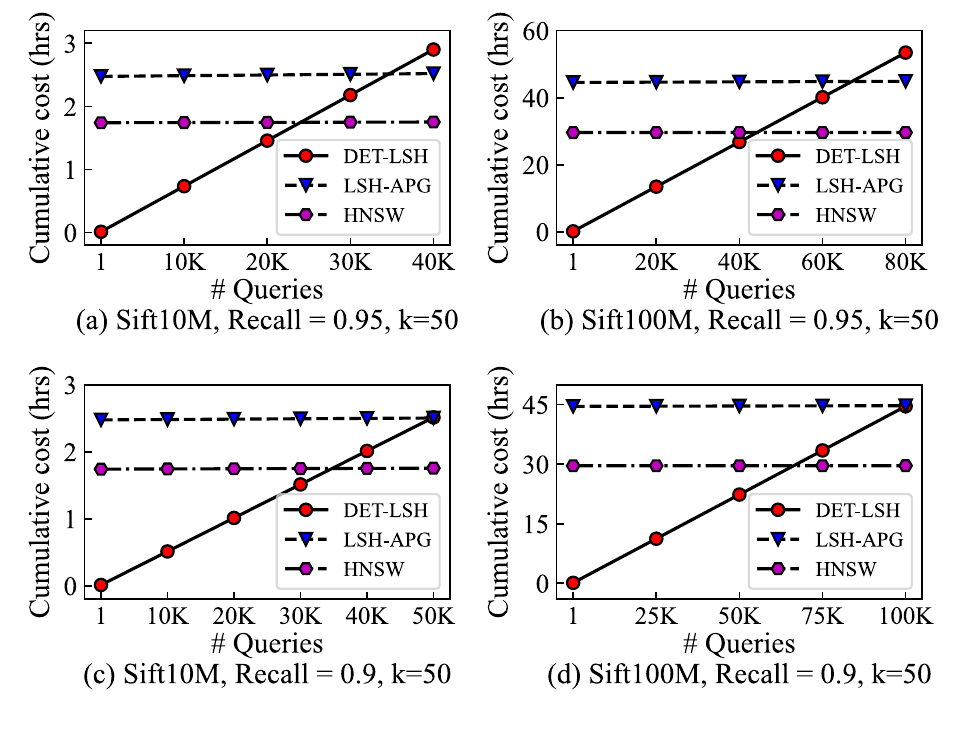}
	\caption{Cumulative query cost (first query includes indexing time).}
	\label{lshvsgraph}
\end{figure}

\subsubsection{Scalability} 

A method has good scalability if it performs well on datasets of different cardinalities. 
To investigate the scalability of all methods, we randomly select different number of points from the \textit{Microsoft SPACEV500M} dataset and compare the indexing and query performance of all methods under default parameter settings. 
Figure~\ref{scalability} shows the results. 
We have the following observations: 
(1) Although the indexing and query times increase with the cardinality for all methods, DET-LSH grows much slower than other LSH-based methods due to the efficiency of DE-Tree (Figure~\ref{scalability}(a) and Figure~\ref{scalability}(b)). 
DET-ONLY constructs indexes and answers queries faster, but the accuracy is much less than other methods.
(2) The recall and overall ratio are relatively stable for all methods. 
The reason is that the data distribution does not change significantly with the cardinality because we select points randomly. 
To sum up, DET-LSH has better scalability than other LSH-based methods.

\subsubsection{Effect of $k$} 
To investigate the effect of $k$, we evaluate the performance of all methods under different $k$. 
Since changing $k$ has little impact on query time, and has no impact on indexing time, we only report the results on recall and overall ratio, shown in Figure~\ref{diffk}. 
We make the following observations: (1) As $k$ increases, the query accuracy of all methods decreases slightly. 
The reason is that the number of candidate points does not change with $k$. A larger $k$ means it is more likely to miss the exact NN points. 
(2) DET-LSH consistently exhibits the best performance among all competitors.

\subsection{Comparison with Graph-based Methods}



In this section, we compare DET-LSH to graph-based methods~\cite{malkov2018efficient,fu2019fast,peng2023efficient,azizi2023elpis,wang2021comprehensive}. 
Nevertheless, LSH-based and graph-based methods have different design principles and characteristics~\cite{hydra2,li2019approximate,zeyubulletin-sep23}, making them suitable to different application scenarios. 
In particular, graph-based methods only support ng-approximate answers~\cite{hydra2}, that is, they do not provide any quality guarantees on their results.
It is important to emphasize that DET-LSH has to pay the cost of providing guarantees for its answers; graph-based methods, that do not provide any guarantees, do not pay this cost.

In our previous experiments, we demonstrated that DET-LSH outperforms other LSH-based methods. 
In this section, we compare DET-LSH to HNSW~\cite{malkov2018efficient}, the state-of-the-art graph-based method~\cite{hydra2}. 
In addition, we compare to a hybrid method, LSH-APG~\cite{lshapg}, which uses LSH to retrieve a high-quality entry point for the subsequent search in an Approximate Proximity Graph (APG). 

\begin{figure}[tb]
    \centering
    \begin{minipage}[t]{0.49\linewidth}
        \includegraphics[width=0.85\linewidth]{./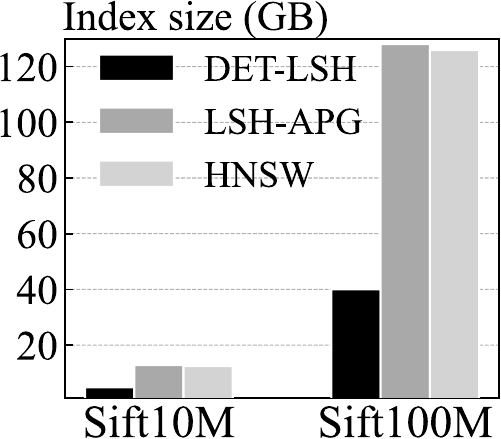}
	\caption{Index size.}
	\label{lshvsgraphindex}
    \end{minipage}
    \begin{minipage}[t]{0.49\linewidth}
        \includegraphics[width=0.95\linewidth]{./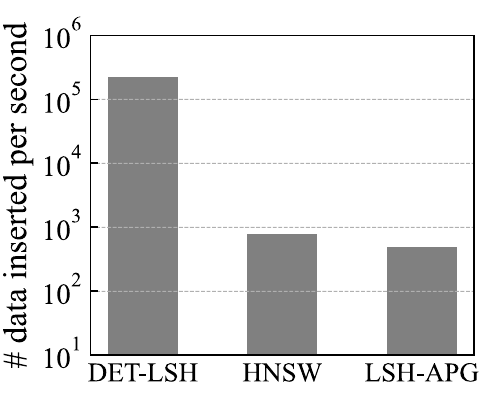}
	\caption{Update efficiency.}
	\label{updateefficiency}
    \end{minipage}
\end{figure}

In terms of indexing and query efficiency, Figure~\ref{lshvsgraph} shows the cumulative query costs of DET-LSH, HNSW, and LSH-APG, where the cost of the first query also includes the indexing time. 
We observe that, as expected, DET-LSH has an advantage in indexing efficiency: it creates the index and answers 30K-70K queries before the best competitor (i.e., HNSW) answers its first query. 
This behavior is partly explained by the more succinct index structure of DET-LSH. 
Figure~\ref{lshvsgraphindex} shows that the DET-LSH index is almost 3x smaller in size than the index constructed by the competitors.
Finally, Figure~\ref{updateefficiency} shows the update efficiency of these methods, by measuring the number of data points per second when inserting the last 10M points of the \textit{Sift100M} dataset into the existing indexes.
In this scenario that involves updates, DET-LSH is 2-3 orders of magnitude faster than HNSW and LSH-APG. 


In summary, LSH-based methods (DET-LSH) have distinct characteristics and different advantages when compared to graph-based methods, pure (such as HNSW) or hybrid (such as LSH-APG), making each method better suited for different scenarios. 

\section{Conclusions} \label{chapter7}

In this paper, we have proposed a novel LSH scheme, called DET-LSH, to efficiently and accurately answer $c$-ANN queries in high-dimensional spaces with strong theoretical guarantees. 
DET-LSH combines the ideas of BC and DM methods, constructing multiple index trees to support range queries based on the Euclidean distance metric, which reduces the probability of missing exact NN points and improves query accuracy. 
To efficiently support range queries in DET-LSH, we designed a dynamic encoding-based tree called DE-Tree, which outperforms data-oriented partitioning trees used in existing LSH-based methods, especially in very large-scale datasets. 
Extensive experiments demonstrate that DET-LSH outperforms the state-of-the-art LSH-based methods in both efficiency and accuracy.

\begin{acks}
This work is supported by the National Natural Science Foundation of China (NSFC) under the grant number 62202450, and supported by the Hellenic Foundation for Research and Innovation (HFRI)
under the “Second Call for HFRI Research Projects to support Faculty
Members and Researchers” (project number: 3684).
\end{acks}


\bibliographystyle{ACM-Reference-Format}
\balance
\bibliography{ref}

\end{document}